\newtheorem{theorem}{\bf{Theorem}}
\newtheorem{lemma}{\bf{Lemma}}
\newtheorem{corollary}{\bf{Corollary}}
\newtheorem{remark}{\bf{Remark}}
\def\BibTeX{{\rm B\kern-.05em{\sc i\kern-.025em b}\kern-.08em
    T\kern-.1667em\lower.7ex\hbox{E}\kern-.125emX}}
\begin{document}

\title{\LARGE{Performance Bounds of Joint Detection with Kalman Filtering and Channel Decoding for Wireless Networked Control Systems}
\thanks{This work is supported in part by the National Key Research and Development Program of China under Grant 2024YFF0509700, in part by the National Natural Science Foundation of China under Grant 62201562, and in part by the Liaoning Provincial Natural Science Foundation of China under Grant 2024--BSBA--51. (\emph{Corresponding Author: Dong Li})}
}

\author{\IEEEauthorblockN{Jinnan Piao\textsuperscript{1}, Dong Li\textsuperscript{1}, Zhibo Li\textsuperscript{1}, Ming Yang\textsuperscript{1}, Xueting Yu\textsuperscript{1}, and Jincheng Dai\textsuperscript{2}}
\IEEEauthorblockA{\textit{\textsuperscript{1}State Key Laboratory of Robotics and Intelligent Systems, Shenyang Institute of Automation, }\\
{\textit  {Chinese Academy of Sciences, Shenyang, China.}} \\
\textit{\textsuperscript{2}Key Laboratory of Universal Wireless Communications, Ministry of Education,} \\
\textit{Beijing University of Posts and Telecommunications, Beijing, China.} \\
Email: \{piaojinnan, lidong, lizhibo, yangming, yuxueting\}@sia.cn, daijincheng@bupt.edu.cn}
\vspace{-0em}}

\maketitle

\begin{abstract}
The joint detection uses Kalman filtering (KF) to estimate the prior probability of control outputs to assist channel decoding. In this paper, we regard the joint detection as maximum a posteriori (MAP) decoding and derive the lower and upper bounds based on the pairwise error probability considering system interference, quantization interval, and weight distribution. We first derive the limiting bounds as the signal-to-noise ratio (SNR) goes to infinity and the system interference goes to zero. Then, we construct an infinite-state Markov chain to describe the consecutive packet losses of the control systems to derive the MAP bounds. Finally, the MAP bounds are approximated as the bounds of the transition probability from the state with no packet loss to the state with consecutive single packet loss. The simulation results show that the MAP performance of $\left(64,16\right)$ polar code and 16-bit CRC coincides with the limiting upper bound as the SNR increases and has $3.0$dB performance gain compared with the normal approximation of the finite block rate at block error rate $10^{-3}$.
\end{abstract}

\begin{IEEEkeywords}
MAP decoding, Kalman filtering, channel decoding, wireless networked control systems, polar codes.
\end{IEEEkeywords}

\section{Introduction}

Wireless networked control systems (WNCSs) are control systems  with the components, i.e., controllers, sensors and actuators, distributed and connected via wireless communication channels, where the control stability and performance tightly relate to the transmission error \cite{NCS_Survey, WNCS_Survey}.

Kalman filtering (KF) plays a fundamental role in estimating system states when transmission error occurs \cite{KalmanLQR}.
Sinopoli \emph{et al.} \cite{KalmanBernoulliOriginal} propose the KF with intermittent observations
and prove that existing a critical value makes the expected estimation error covariance is bounded when the transmission process follows Bernoulli distribution and the error probability is less than the critical value.
Then, the probability bounds of the estimation error covariance are proposed in \cite{5398830}.
Currently, KF with intermittent observations is widely used in WNCSs, such as robots \cite{11007832}, vehicles \cite{10387409} and unmanned aerial vehicles \cite{9453790}.

Channel codes can effectively mitigate transmission error \cite{LinShuBook} and a simple on-off error control coding scheme can improve control quality \cite{WNCSchannelCodingOnOff}. To ensure a wide range of application scenarios, channel codes are generally designed assuming that the transmitted bits obey uniform distribution and no prior information is considered.
To utilize the prior information estimated by KF, a maximum a posteriori (MAP) receiver for each element of the system state is proposed in \cite{WNCSchannelCodingCRC} and an iterative joint detection between KF and channel decoding with LDPC codes is proposed in \cite{arxivIJDLDPC}, which exhibits potential in optimizing the block error rate (BLER) performance of communications and the root mean square error (RMSE) performance of controls with the prior information.

In this paper, we regard the joint detection with KF and channel decoding as MAP decoding, where the KF is used to estimate the probability density of system outputs to calculate the prior probability.
We first derive the limiting lower and upper bounds of MAP decoding with the pairwise error probability considering control system interference, quantization interval, and codeword weight distribution, as the signal-to-noise ratio (SNR) goes to infinity and the system interference goes to zero.
Then, we construct an infinite-state Markov chain to characterize the MAP performance with the stationary distribution to derive the MAP bounds, where the state ${S}_i$ is the event that the number of consecutive packet losses is $i$, $i=0,1,\cdots$.
Finally, we establish a relationship between the MAP bounds and upper bounds of the transition probabilities and approximate the MAP bounds as the bounds of the transition probability from
state $S_0$ to state $S_1$.
The simulation results show that the MAP performance coincides with the limiting upper bound as the SNR increases and outperforms the normal approximation of the finite block rate, and the RMSE of the systems with MAP decoding is lower than that with maximum likelihood (ML) decoding, which demonstrates the advantage of jointly designing control and communication systems with prior information.

\emph{Notation Conventions}: The lowercase letters, e.g., $x$, are used to denote scalars. The bold lowercase letters, e.g., ${\mathbf{x}}$, are used to denote vectors.
The notation $x_i$ denotes the $i$-th element of ${\mathbf{x}}$.
Sets are denoted by calligraphic characters, e.g., $\mathcal{X}$, and the notation $|\mathcal{X}|$ denotes the cardinality of $\mathcal{X}$.
The bold capital letters, e.g., $\mathbf{X}$, are used to denote matrices.
${\mathbb{R}}$ represents the real number field.

\section{System Model}

We consider a discrete linear time-invariant system as
\begin{equation*}
\begin{aligned}
  \mathbf{x}\left[ k+1 \right]&=\mathbf{Ax}\left[ k \right]+\mathbf{Bu}\left[ k \right]+\mathbf{w}\left[ k \right],  \\
 \mathbf{y}\left[ k \right]&=\mathbf{Cx}\left[ k \right]+\mathbf{v}\left[ k \right],
\end{aligned}
\end{equation*}
where $\mathbf{x}\left[ k \right] \in {{\mathbb{R}}^{{{N}_{x}}}}$, $\mathbf{u}\left[ k \right] \in {{\mathbb{R}}^{{{N}_{u}}}}$, and $\mathbf{y}\left[ k \right]\in {{\mathbb{R}}^{{{N}_{y}}}}$ are the state vector, the input vector, and the output vector at time index $k$, respectively. $\mathbf{A}\in {{\mathbb{R}}^{{{N}_{x}}\times {{N}_{x}}}}$, $\mathbf{B}\in {{\mathbb{R}}^{{{N}_{x}}\times {{N}_{u}}}}$, and $\mathbf{C}\in {{\mathbb{R}}^{{{N}_{y}}\times {{N}_{x}}}}$ are the known system parameter matrices with appropriate dimensions. $\mathbf{w}\left[ k \right]\in {{\mathbb{R}}^{{{N}_{x}}}}$ and $\mathbf{v}\left[ k \right]\in {{\mathbb{R}}^{{{N}_{y}}}}$ are Gaussian noises with zero means and covariance matrices ${\mathbf W}$ and ${\mathbf V}$, respectively.

Each element of $\mathbf{y}\left[ k \right]$ is quantized by an $n$-bit uniform quantizer with the quantization range $\left[ -Z, Z \right)$.
Define $\alpha \left( x \right)\triangleq {\hat z}_l = 0.5\left( {{z}_{l}}+{{z}_{l+1}} \right)$ as the midpoint of quantized interval $x\in \left[ {{z}_{l}},{{z}_{l+1}} \right)$,
$\beta \left( x \right)\triangleq l$ as the index of quantized interval with $z_0 = -Z-\Delta/2$, $z_l = z_{l-1} + \Delta$ and $\Delta ={Z}/{ {2}^{n-1} }$, and ${\mathbf b}^m = \left[b_{n-1}^m, \cdots, b_1^m, b_0^m\right]^T$ as the binary representation of $m$ with the most and the least significant bits $b_{n-1}^m$ and $b_0^m$.
The $N_y$-length quantized vector is
\begin{equation*}
  \mathbf{q}\left[ k \right]= \alpha\left({\bf y}\left[k\right]\right) = \left[ \alpha \left( {{y}_{1}}\left[ k \right] \right),\alpha \left( {{y}_{2}}\left[ k \right] \right),\cdots ,\alpha \left( {{y}_{{{N}_{y}}}}\left[ k \right] \right) \right]^T
\end{equation*}
and the $\left(N_yn\right)$-bit vector sent to the communication layer is
\begin{equation*}
\mathbf{b}\left[ k \right]= {\bf b}^{\beta\left({\bf y}\left[k\right]\right)} =\left[ \left({{\mathbf{b}}^{\beta \left( {{y}_{1}}\left[ k \right] \right)}}\right)^T,\cdots ,\left({{\mathbf{b}}^{\beta \left( {{y}_{{{N}_{y}}}}\left[ k \right] \right)}}\right)^T \right]^T.
\end{equation*}

In the communication layer, for a channel code $\left(N, K = N_yn\right)$ with code rate $R = K/N$, the codeword is $\mathbf{c}\left[k\right] = \mathbf{G}\mathbf{b}\left[k\right]$, where $\mathbf{G}$ is the generator matrix.
The codeword ${\bf c}\left[k\right]$ is modulated into the transmitted vector by binary phase shift keying (BPSK), i.e., ${\bf t}\left[k\right] = {\bf 1} - 2{\bf c}\left[k\right]$.
The received vector is $\mathbf{r}\left[k\right] = \mathbf{t}\left[k\right] + \mathbf{n}\left[k\right]$,
where $n_i\left[k\right]$ is i.i.d. additive white Gaussian noise (AWGN) with zero mean and variance $N_0/2$.

After the MAP detection, the estimation $\hat{\mathbf{q}}\left[ k \right]$ of $\mathbf{q}\left[ k \right]$ is obtained.
Given the set ${\mathcal F}\left[ k \right] = \left\{\hat{\mathbf q}\left[ 1 \right],\cdots, \hat{\mathbf q}\left[ k \right], \gamma \left[ 1 \right],\cdots,\gamma \left[ k \right]\right\}$,
the KF with intermittent observations is used to calculate the estimated system states $\hat{\mathbf x}\left[ k \right]$ and covariance matrix $\hat{\mathbf{P}}\left[ k \right]$ as
\begin{equation*}
\begin{aligned}
{{\hat{\bf x}}_ - }\left[ k \right] & = \mathbb{E}\left(\left.\mathbf{x}\left[ k \right]\right| {\mathcal F}\left[ k-1 \right] \right)
    = {{{\bf A}\hat{\bf x}}}\left[ {k - 1} \right] + {\bf{Bu}}\left[ {k - 1} \right]\\
{{\hat{\bf e}}_ - }\left[ k \right] &= {{\mathbf{x}}\left[ k \right] - {{{\hat{\mathbf x}}}_ - }\left[ k \right]}\\
{{{\hat{\bf P}}}_ - }\left[ k \right] & =  \mathbb{E}\left( {\left. {{{\hat{\bf e}}_-^T}\left[ k \right]{{\hat{\bf e}}_ - }\left[ k \right]} \right|\mathcal{F}\left[ {k - 1} \right]} \right)\\
    &= {{{\bf A}\hat{\bf P}}}\left[ {k - 1} \right]{{\bf{A}}^T} + {\bf{W}}\\
{\bf{K}}\left[ k \right] &= {{{\hat{\bf P}}}_ - }\left[ k \right]{{\bf{C}}^T}{( {{\bf{C}}{{{\hat{\bf P}}}_ - }\left[ k \right]{{\bf{C}}^T} + {\bf{V}}} )^{ - 1}}\\
{\hat{\bf x}}\left[ k \right] &= \mathbb{E}\left(\left.\mathbf{x}\left[ k \right]\right| {\mathcal F}\left[ k \right] \right)\\
    &= {{{\hat{\bf x}}}_ - }\left[ k \right] + \gamma \left[ k \right]{\bf{K}}\left[ k \right]( {{\hat{\bf q}}\left[ k \right] - {\bf{C}}{{{\hat{\bf x}}}_ - }\left[ k \right]} )\\
{\hat{\bf e}}\left[ k \right] &= {{\mathbf{x}}\left[ k \right] - {{{\hat{\mathbf x}}}}\left[ k \right]} \\
{\hat{\bf P}}\left[ k \right] & =  \mathbb{E}\left( {\left. {{{\hat{\bf e}}^T}\left[ k \right]{{\hat{\bf e}}}\left[ k \right]} \right|\mathcal{F}\left[ {k} \right]} \right)
    = \left( {{\bf{I}} - \gamma \left[ k \right]{\bf{K}}\left[ k \right]{\bf{C}}} \right){{{\hat{\bf P}}}_ - }\left[ k \right],
\end{aligned}
\end{equation*}
where $\gamma\left[k\right] = 1$ if the decoded bits are correct, and $\gamma\left[k\right] = 0$ otherwise.
Then, the controller uses $\hat{\mathbf{x}}\left[ k \right]$ and the reference state vector ${{\mathbf{x}}_{\rm{ref}}}\left[ k \right] \in {\mathbb R}^{N_x}$ to calculate the input vector as
$\mathbf{u}\left[ k \right] = \mathbf{K}_{\rm con}\left( {{\mathbf{x}}_{\rm{ref}}}\left[ k \right]- \hat{\mathbf{x}}\left[ k \right]\right)$,
where $\mathbf{K}_{\rm con} \in {\mathbb R}^{N_u \times N_x}$ is the controller gain matrix.

We define the functions as
$h\left(\bf X\right)  \triangleq  {\bf A}{\bf X}{\bf A}^T + {\bf W}$,
$g\left(\bf X\right)  \triangleq {\bf C}{\bf X}{\bf C}^T + {\bf V}$, and
$\tilde g\left( {\mathbf{X}} \right)   \triangleq {\mathbf{X}} - {\mathbf{X}}{{\mathbf{C}}^T}{\left( {{\mathbf{CX}}{{\mathbf{C}}^T} + {\mathbf{V}}} \right)^{ - 1}}{\mathbf{CX}}$.
For functions $f_1$ and $f_2$, 
$
{f_1} \circ {f_2}  \triangleq f_1\left(f_2\left(\bf X\right)\right),
$
and 
$
f_1^a\left(\bf X\right)  \triangleq \underbrace {{f_1} \circ {f_1} \circ  \cdots  \circ {f_1}}_{a~\rm{times}}\left( {\bf{X}} \right).
$

\section{Limiting Bounds of MAP Decoding}

In this section, the MAP decoding criterion is provided and the limiting bounds of the performance are derived. To simplify the representation, we omit the time index $\left[k\right]$.

Since ${{{\hat{\bf x}}}_ - }$ and ${{{\hat{\bf P}}}_ - }$ are the predicted system states and covariance matrix, respectively, we have the predicted output
\begin{equation}
{\bm\mu} = \mathbb{E}\left(\left.\mathbf{y}\right| {\mathcal F}\left[ k-1 \right] \right) = \mathbb{E}\left(\left.\mathbf{Cx}\right| {\mathcal F}\left[ k-1 \right] \right) = \mathbf{C}{{{\hat{\bf x}}}_ - }
\end{equation}
with the corresponding covariance matrix
\begin{equation}
\mathbf{\Sigma} = \mathbb{E}\left(\left.\left(\mathbf{y}-\bm{\mu}\right)^T\left(\mathbf{y}-\bm{\mu}\right) \right| {\mathcal F}\left[ k-1 \right] \right) = g({\hat{\bf P}}_ - ).
\end{equation}
Hence, $\mathbf{y}$ obeys Gaussian distribution, i.e., $\mathbf{y} \sim \mathcal{N}({\bm\mu}, \mathbf{\Sigma})$, with
\begin{equation*}
p\left(\mathbf{y}\right) =  \frac{1}{{{{\left( {2\pi } \right)}^{\frac{{{N_y}}}{2}}}{{\left(\det \mathbf{\Sigma} \right)}^{\frac{1}{2}}}}}\exp \left( { - \frac{1}{2}{{\left( {{\bf{y}} - {\bm\mu}} \right)}^T}\mathbf{\Sigma}^{ - 1}\left( {{\bf{y}} - {\bm\mu}} \right)} \right).
\end{equation*}
With this, the MAP detection criterion is
\begin{equation}\label{EqMAPdetectionCriterion}
\begin{aligned}
{\hat {\mathbf b}} &= \mathop {\arg \max }\limits_{{\bf{b}} \in {{\left\{ {0,1} \right\}}^{{N_y}n}}} \Pr \left( {\left. {\bf{b}} \right|{\bf{r}},{\bm\mu}, \mathbf{\Sigma}} \right) \\
&= \mathop {\arg \max }\limits_{{\bf{b}} \in {{\left\{ {0,1} \right\}}^{{N_y}n}}} \Pr \left( \left. {\bf{r}} \right| {\bf{b}} \right) \Pr \left( {\left. {\bf{b}} \right|{\bm\mu}, \mathbf{\Sigma}} \right) \\
& = \mathop {\arg \max }\limits_{{\bf{b}} \in {{\left\{ {0,1} \right\}}^{{N_y}n}}} 2{\mathbf r}^T{\mathbf t}+{N_0}\ln\left(P_{{\bm\mu}, \mathbf{\Sigma}}^{{\bf b}}\right),
\end{aligned}
\end{equation}
where
$
P_{{\bm\mu}, \mathbf{\Sigma}}^{{\bf b}} =
\Pr \left( {\left. {\bf{b}} \right|{\bm\mu}, \mathbf{\Sigma}} \right) = \int\nolimits_{{\bf{y}} \in \left\{ {\left. {\bf{y}} \right| {{\bf{b}}^{\beta \left( {\bf{y}} \right)} = {\bf{b}}}} \right\}} {p\left( {\bf{y}} \right)\mathrm{d}{\bf{y}}}.
$

The MAP error probability given ${\bm\mu}$ and ${\mathbf{\Sigma}}$ is
\begin{equation}
P_{{\bm\mu}, \mathbf{\Sigma}}^{{\rm{MAP}}} = \Pr \left( {\left. {{{\bf{b}}_t} \ne \hat {\bf{b}}} \right|{\bm\mu}, \mathbf{\Sigma}} \right)
\end{equation}
and the MAP error probability $P^{{\rm{MAP}}}$ is calculated as
\begin{equation}\label{EqMAPcalprocess}
\begin{aligned}
P^{{\rm{MAP}}} & = \Pr \left( {{{\bf{b}}_t} \ne \hat {\bf{b}}} \right) = \int\nolimits_{{\bm\mu}, \mathbf{\Sigma}} P_{{\bm\mu}, \mathbf{\Sigma}}^{{\rm{MAP}}} p\left({{\bm\mu}, \mathbf{\Sigma}}\right) \mathrm{d}{\bm\mu} \mathrm{d}\mathbf{\Sigma}  \\
&= \int\nolimits_{\mathbf{\Sigma}} \left( \int\nolimits_{{\bm\mu}} P_{{\bm\mu}, \mathbf{\Sigma}}^{{\rm{MAP}}} p\left(\left.{\bm\mu}\right| \mathbf{\Sigma}\right) \mathrm{d}{\bm\mu} \right) p\left(\mathbf{\Sigma}\right) \mathrm{d}\mathbf{\Sigma}
\end{aligned}
\end{equation}

Then, the pairwise error probability (PEP) $P_{{\bf b}_t, {\bm\mu}, \mathbf{\Sigma}}^{{\bf b}_t \ne {\bf b}_e}$, the bounds of $P_{{\bm\mu}, \mathbf{\Sigma}}^{{\rm{MAP}}}$, and the limiting bounds of $P^{{\rm{MAP}}}$ are provided in Lemma \ref{LemmaPEP}, Lemma \ref{LemmaMAPbounds}, and Theorem \ref{TheoremMAPlimit}, respectively.

\begin{lemma}\label{LemmaPEP}
Given the transmitted bit vector ${\mathbf b}_t$, the estimated bit vector ${\mathbf b}_e$ and the Hamming weight ${d_{{{\bf{c}}_t},{{\bf{c}}_e}}}$ between ${{{\bf{c}}_t}}$ and ${{{\bf{c}}_e}}$, the PEP $P_{{\bf b}_t, {\bm\mu}, \mathbf{\Sigma}}^{{\bf b}_t \ne {\bf b}_e}$ is
\begin{equation}\label{EqPEP}
\begin{aligned}
&P_{{\bf b}_t, {\bm\mu}, \mathbf{\Sigma}}^{{\bf b}_t \ne {\bf b}_e} = {\Pr}\left( {\left. {{{\bf{b}}_t} \ne {{\bf{b}}_e}} \right|{{\bf{b}}_t},{\bm\mu}, \mathbf{\Sigma}} \right)\\
&= Q\left( {\sqrt {\frac{{2{d_{{{\bf{c}}_t},{{\bf{c}}_e}}}}}{{{N_0}}}}  + \sqrt {\frac{{{N_0}}}{{8{d_{{{\bf{c}}_t},{{\bf{c}}_e}}}}}} \ln \left( {\frac{{P_{{\bm\mu}, \mathbf{\Sigma}}^{{\bf b}_t}}}{{P_{{\bm\mu}, \mathbf{\Sigma}}^{{\bf b}_e}}}} \right)} \right)
\end{aligned}
\end{equation}
\end{lemma}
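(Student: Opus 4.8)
The plan is to reduce the pairwise error event to a one-dimensional Gaussian tail probability. First I would take the MAP metric appearing in the objective of \eqref{EqMAPdetectionCriterion}, namely $M\left(\mathbf{b}\right) \triangleq 2\mathbf{r}^T\mathbf{t} + N_0\ln\left(P_{\bm\mu,\mathbf\Sigma}^{\mathbf b}\right)$, and observe that the pairwise decision $\mathbf{b}_t \to \mathbf{b}_e$ is in error exactly when $M\left(\mathbf{b}_e\right) > M\left(\mathbf{b}_t\right)$. Writing $\mathbf{t}_t$ and $\mathbf{t}_e$ for the BPSK images of $\mathbf{c}_t$ and $\mathbf{c}_e$, this error event rearranges to $2\mathbf{r}^T\left(\mathbf{t}_e - \mathbf{t}_t\right) > N_0\ln\left(P_{\bm\mu,\mathbf\Sigma}^{\mathbf b_t}/P_{\bm\mu,\mathbf\Sigma}^{\mathbf b_e}\right)$, which isolates the single scalar statistic that governs the comparison.

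Next, since $\mathbf{b}_t$ is transmitted I substitute $\mathbf{r} = \mathbf{t}_t + \mathbf{n}$ and split the left-hand side into a deterministic term $2\mathbf{t}_t^T\left(\mathbf{t}_e - \mathbf{t}_t\right)$ and a noise term $U \triangleq 2\mathbf{n}^T\left(\mathbf{t}_e - \mathbf{t}_t\right)$. The BPSK structure $\mathbf{t} = \mathbf{1} - 2\mathbf{c}$ makes both terms elementary: coordinates where $\mathbf{c}_t$ and $\mathbf{c}_e$ agree contribute nothing, whereas at each of the $d_{\mathbf{c}_t,\mathbf{c}_e}$ disagreeing coordinates one has $t_{e,i} - t_{t,i} = -2t_{t,i}$ together with $t_{t,i}^2 = 1$. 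This bookkeeping yields $2\mathbf{t}_t^T\left(\mathbf{t}_e - \mathbf{t}_t\right) = -4d_{\mathbf{c}_t,\mathbf{c}_e}$ and $\left\|\mathbf{t}_e - \mathbf{t}_t\right\|^2 = 4d_{\mathbf{c}_t,\mathbf{c}_e}$, so the error event becomes $U > 4d_{\mathbf{c}_t,\mathbf{c}_e} + N_0\ln\left(P_{\bm\mu,\mathbf\Sigma}^{\mathbf b_t}/P_{\bm\mu,\mathbf\Sigma}^{\mathbf b_e}\right)$.

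Finally I would identify $U$ as a scalar Gaussian. Because each $n_i$ is i.i.d.\ with variance $N_0/2$, the statistic $U$ has zero mean and variance $4\cdot\left(N_0/2\right)\left\|\mathbf{t}_e - \mathbf{t}_t\right\|^2 = 8N_0 d_{\mathbf{c}_t,\mathbf{c}_e}$. The PEP is then $Q$ evaluated at the threshold divided by the standard deviation $\sqrt{8N_0 d_{\mathbf{c}_t,\mathbf{c}_e}}$; separating this ratio into its two summands and simplifying $4d_{\mathbf{c}_t,\mathbf{c}_e}/\sqrt{8N_0 d_{\mathbf{c}_t,\mathbf{c}_e}} = \sqrt{2d_{\mathbf{c}_t,\mathbf{c}_e}/N_0}$ and $N_0/\sqrt{8N_0 d_{\mathbf{c}_t,\mathbf{c}_e}} = \sqrt{N_0/\left(8d_{\mathbf{c}_t,\mathbf{c}_e}\right)}$ reproduces \eqref{EqPEP} exactly. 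The derivation is essentially mechanical, so the only obstacle is careful sign and scaling bookkeeping: one must track the direction of the inequality when the log-prior ratio is moved across (so that the ratio appears as $P_{\bm\mu,\mathbf\Sigma}^{\mathbf b_t}/P_{\bm\mu,\mathbf\Sigma}^{\mathbf b_e}$ with a positive coefficient) and correctly propagate the factor $N_0/2$ through the quadratic form defining the variance of $U$. As a sanity check, a larger prior for $\mathbf{b}_t$ increases the $Q$-function argument and thus lowers the PEP, as expected.
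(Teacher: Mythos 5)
Your proof is correct and follows essentially the same route as the paper's: both express the pairwise error event via the MAP metric comparison $2\mathbf{r}^T\left(\mathbf{t}_e-\mathbf{t}_t\right) > N_0\ln\left(P_{\bm\mu,\mathbf\Sigma}^{\mathbf{b}_t}/P_{\bm\mu,\mathbf\Sigma}^{\mathbf{b}_e}\right)$, substitute $\mathbf{r}=\mathbf{t}_t+\mathbf{n}$ to isolate a zero-mean Gaussian statistic, and evaluate the resulting tail probability with the $Q$-function. The only difference is cosmetic --- you carry the factor of $2$ inside the noise statistic (variance $8N_0 d_{\mathbf{c}_t,\mathbf{c}_e}$) while the paper divides it out (variance $2N_0 d_{\mathbf{c}_t,\mathbf{c}_e}$) --- and you spell out the bookkeeping steps the paper leaves implicit.
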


\begin{proof}
Given ${\mathbf r}_t = {\mathbf t}_t + {\mathbf n}$, we have
\begin{equation}\label{EqPEP1}
\begin{aligned}
&P_{{\bf b}_t, {\bm\mu}, \mathbf{\Sigma}}^{{\bf b}_t \ne {\bf b}_e}
 = \Pr \left( {2{{\bf{r}}_t^T}\left( {{{\bf{t}}_e} - {{\bf{t}}_t}} \right) - {N_0}\ln \left( {\frac{{P_{{\bm\mu}, \mathbf{\Sigma}}^{{\bf b}_t}}}{{P_{{\bm\mu}, \mathbf{\Sigma}}^{{\bf b}_e}}}} \right) > 0} \right) \\
 &= \Pr \left( {{{\bf{n}}^T}\left( {{{\bf{t}}_e} - {{\bf{t}}_t}} \right) > 2{d_{{{\bf{c}}_t},{{\bf{c}}_e}}} + \frac{{{N_0}}}{2}\ln \left( {\frac{{P_{{\bm\mu}, \mathbf{\Sigma}}^{{\bf b}_t}}}{{P_{{\bm\mu}, \mathbf{\Sigma}}^{{\bf b}_e}}}} \right)} \right).
\end{aligned}
\end{equation}
Then, since ${{\bf{n}}^T}\left( {{{\bf{t}}_e} - {{\bf{t}}_t}} \right) \sim {\mathcal N}\left(0, 2dN_0\right)$, \eqref{EqPEP} is derived.
\end{proof}

\begin{lemma}\label{LemmaMAPbounds}
$P_{{\bm\mu}, \mathbf{\Sigma}}^{{\rm{MAP}}}$ is bounded as
$
P_{{\bm\mu}, \mathbf{\Sigma}}^{\mathrm{LB}} \le P_{{\bm\mu}, \mathbf{\Sigma}}^{{\rm{MAP}}} \le P_{{\bm\mu}, \mathbf{\Sigma}}^{\mathrm{UB}}
$, where
\begin{equation}
P_{{\bm\mu}, \mathbf{\Sigma}}^{\mathrm{LB}} = \sum\limits_{{\bf{b}}_t} {P_{{\bm\mu}, \mathbf{\Sigma}}^{{\bf b}_t}\mathop {\max }\limits_{{{\bf{b}}_e} \in {{\left\{ {0,1} \right\}}^{{N_y}n}}\backslash {{\bf{b}}_t}} P_{{\bf b}_t, {\bm\mu}, \mathbf{\Sigma}}^{{\bf b}_t \ne {\bf b}_e}},
\end{equation}
\begin{equation}
P_{{\bm\mu}, \mathbf{\Sigma}}^{\mathrm{UB}} = \sum\limits_{{\bf{b}}_t} {P_{{\bm\mu}, \mathbf{\Sigma}}^{{\bf b}_t}\sum\limits_{{{\bf{b}}_e} \in {{\left\{ {0,1} \right\}}^{{N_y}n}}\backslash {{\bf{b}}_t}} P_{{\bf b}_t, {\bm\mu}, \mathbf{\Sigma}}^{{\bf b}_t \ne {\bf b}_e}}.
\end{equation}
\end{lemma}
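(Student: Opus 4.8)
The plan is to condition on the transmitted vector ${\bf b}_t$, recognize the conditional MAP error event as a union of the pairwise decision events from Lemma \ref{LemmaPEP}, and then bound that union from above by subadditivity and from below by a single dominating term. First I would invoke the law of total probability, partitioning on which bit vector was actually sent. Since ${\bf b}_t$ is transmitted with prior probability $P_{{\bm\mu}, \mathbf{\Sigma}}^{{\bf b}_t}$, the overall conditional error probability decomposes as
\begin{equation*}
P_{{\bm\mu}, \mathbf{\Sigma}}^{{\rm{MAP}}} = \sum_{{\bf b}_t} P_{{\bm\mu}, \mathbf{\Sigma}}^{{\bf b}_t}\, \Pr\left( \left. {{\bf b}_t \ne \hat{\bf b}} \right| {\bf b}_t, {\bm\mu}, \mathbf{\Sigma} \right),
\end{equation*}
so it suffices to bound $\Pr\left( \left. {{\bf b}_t \ne \hat{\bf b}} \right| {\bf b}_t, {\bm\mu}, \mathbf{\Sigma} \right)$ for each fixed ${\bf b}_t$ and then reweight by the same priors.

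Next, reading the MAP criterion \eqref{EqMAPdetectionCriterion}, the decoder selects the candidate maximizing the metric $2{\bf r}^T{\bf t} + N_0\ln( P_{{\bm\mu}, \mathbf{\Sigma}}^{{\bf b}})$, so given that ${\bf b}_t$ was sent an error occurs precisely when some competitor ${\bf b}_e \ne {\bf b}_t$ attains a strictly larger metric. Writing $A_{{\bf b}_e}$ for the event that ${\bf b}_e$ outscores ${\bf b}_t$, the conditional error event is $\bigcup_{{\bf b}_e \ne {\bf b}_t} A_{{\bf b}_e}$, and comparing with the computation in the proof of Lemma \ref{LemmaPEP} shows that $\Pr\left( \left. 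A_{{\bf b}_e} \right| {\bf b}_t, {\bm\mu}, \mathbf{\Sigma}\right) = P_{{\bf b}_t, {\bm\mu}, \mathbf{\Sigma}}^{{\bf b}_t \ne {\bf b}_e}$ exactly; ties have probability zero under the continuous AWGN model and are ignored. The upper bound then follows from the union bound $\Pr\left( \bigcup_{{\bf b}_e} A_{{\bf b}_e}\right) \le \sum_{{\bf b}_e} \Pr\left( A_{{\bf b}_e}\right)$, while the lower bound follows because the union contains each individual event, so $\Pr\left( \bigcup_{{\bf b}_e} A_{{\bf b}_e}\right) \ge \Pr\left( A_{{\bf b}_e}\right)$ holds for every ${\bf b}_e$ and hence $\Pr\left( \bigcup_{{\bf b}_e} A_{{\bf b}_e}\right) \ge \max_{{\bf b}_e} \Pr\left( A_{{\bf b}_e}\right)$. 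Substituting these two inequalities into the total-probability expression and inserting the PEP from \eqref{EqPEP} yields $P_{{\bm\mu}, \mathbf{\Sigma}}^{\mathrm{UB}}$ and $P_{{\bm\mu}, \mathbf{\Sigma}}^{\mathrm{LB}}$ respectively.

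The step I expect to be the main obstacle is the exact identification of the conditional MAP error event with the union of pairwise events and, in turn, of each $\Pr(A_{{\bf b}_e})$ with the PEP of Lemma \ref{LemmaPEP}: the PEP is a genuinely pairwise comparison of ${\bf b}_t$ against a single ${\bf b}_e$ with all other candidates ignored, whereas the true error requires ${\bf b}_t$ to lose to at least one competitor among all of them simultaneously. Once this set-theoretic correspondence is pinned down, the two bounds are just subadditivity and single-term domination of a probability measure, and the only remaining subtlety — the measure-zero tie set — is harmless.
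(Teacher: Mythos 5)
Your proof is correct and follows essentially the same route as the paper: the paper simply asserts as obvious the two inequalities $\max_{{\bf b}_e} P_{{\bf b}_t, {\bm\mu}, \mathbf{\Sigma}}^{{\bf b}_t \ne {\bf b}_e} \le \Pr\left( \left. {\bf b}_t \ne \hat{\bf b} \right| {\bf b}_t, {\bm\mu}, \mathbf{\Sigma}\right) \le \sum_{{\bf b}_e} P_{{\bf b}_t, {\bm\mu}, \mathbf{\Sigma}}^{{\bf b}_t \ne {\bf b}_e}$ and then averages over ${\bf b}_t$ with weights $P_{{\bm\mu}, \mathbf{\Sigma}}^{{\bf b}_t}$, which is exactly your decomposition. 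Your write-up merely makes explicit what the paper leaves implicit (the union-of-pairwise-events identification, subadditivity for the upper bound, single-event domination for the lower bound, and the measure-zero tie set), so it is a more detailed version of the same argument.
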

\begin{proof}
Obviously, due to
\begin{equation}
\begin{aligned}
\mathop {\max }\limits_{{{\bf{b}}_e} \in {{\left\{ {0,1} \right\}}^{{N_y}n}}\backslash {{\bf{b}}_t}} P_{{\bf b}_t, {\bm\mu}, \mathbf{\Sigma}}^{{\bf b}_t \ne {\bf b}_e} &\le
{\Pr}\left( {\left. {{{\bf{b}}_t} \ne {\hat{\bf{b}}}} \right|{{\bf{b}}_t},{\bm\mu}, \mathbf{\Sigma}} \right)\\
 &\le \sum\limits_{{{\bf{b}}_e} \in {{\left\{ {0,1} \right\}}^{{N_y}n}}\backslash {{\bf{b}}_t}} P_{{\bf b}_t, {\bm\mu}, \mathbf{\Sigma}}^{{\bf b}_t \ne {\bf b}_e},
\end{aligned}
\end{equation}
Lemma \ref{LemmaMAPbounds} is proved.
\end{proof}

\begin{theorem}\label{TheoremMAPlimit}
Assuming ${{\mathbf{x}}_{\rm{ref}}}$ is fixed and the control system is stable, as the SNR $1/N_0$ goes to infinity and ${{\bf{P}}_\infty }$ goes to zero, the limiting performance is
\begin{equation}\label{EqMAPlimitation}
\lim\limits_{\frac{1}{N_0} \rightarrow \infty, {{\bf{P}}_\infty } \rightarrow 0} P^{{\rm{MAP}}} = P_{{\bm\mu}_{\rm{ref}}, \mathbf{\Sigma}_{\infty}}^{{\rm{MAP}}},
\end{equation}
which is bounded as
\begin{equation}\label{EqMAPlimitationBounds}
P_{{\bm\mu}_{\rm{ref}}, \mathbf{\Sigma}_{\infty}}^{\mathrm{LB}} \le P_{{\bm\mu}_{\rm{ref}}, \mathbf{\Sigma}_{\infty}}^{{\rm{MAP}}} \le P_{{\bm\mu}_{\rm{ref}}, \mathbf{\Sigma}_{\infty}}^{\mathrm{UB}},
\end{equation}
where ${\bm\mu}_{\rm{ref}} = {\bf C}{{\mathbf{x}}_{\rm{ref}}}$, $\mathbf{\Sigma}_{\infty} = g\circ h\left({{\bf{P}}_\infty }\right)$, ${{\bf{P}}_\infty } = \tilde g\left( {\hat{\bf P}}_{-}\left[ \infty \right] \right)$ and ${\hat{\bf P}}_{-}\left[ \infty \right]$ is the result of standard algebraic Riccati equation as
$
{\hat{\bf P}}_{-}\left[ \infty \right] =h\circ \tilde g \left({\hat{\bf P}}_{-}\left[ \infty \right]\right)
$.
\end{theorem}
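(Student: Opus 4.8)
The plan is to separate the claim into the limiting identity \eqref{EqMAPlimitation} and the sandwich \eqref{EqMAPlimitationBounds}, and to observe that the second is free once the first is established: Lemma \ref{LemmaMAPbounds} holds for \emph{every} admissible conditioning pair, so specializing it to $(\bm\mu,\mathbf\Sigma)=(\bm\mu_{\rm ref},\mathbf\Sigma_\infty)$ yields $P_{\bm\mu_{\rm ref},\mathbf\Sigma_\infty}^{\mathrm{LB}}\le P_{\bm\mu_{\rm ref},\mathbf\Sigma_\infty}^{\mathrm{MAP}}\le P_{\bm\mu_{\rm ref},\mathbf\Sigma_\infty}^{\mathrm{UB}}$ verbatim. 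Hence the whole burden is \eqref{EqMAPlimitation}. I would start from the averaged representation \eqref{EqMAPcalprocess}, which exhibits $P^{\mathrm{MAP}}$ as the expectation of the conditional error $P_{\bm\mu,\mathbf\Sigma}^{\mathrm{MAP}}$ against the joint law of the conditioning variables $(\bm\mu,\mathbf\Sigma)$, and argue that under the two limits this law collapses to a point mass at $(\bm\mu_{\rm ref},\mathbf\Sigma_\infty)$, after which the limit may be pushed through the integral.

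I would handle the two conditioning variables with the two limits separately, since each limit suppresses a distinct source of randomness. The covariance $\mathbf\Sigma=g(\hat{\mathbf P}_-)$ is random only through the packet-loss indicators $\gamma$ that drive the intermittent-observation Riccati recursion. As $1/N_0\to\infty$, every PEP in Lemma \ref{LemmaPEP} is a $Q$-function whose argument $\sqrt{2 d_{\mathbf c_t,\mathbf c_e}/N_0}$ diverges, so $\Pr(\gamma=0)\le P_{\bm\mu,\mathbf\Sigma}^{\mathrm{UB}}\to0$. Consequently the recent observations are loss-free with probability tending to one, and iterating the no-loss map $h\circ\tilde g$ contracts toward its stabilizing fixed point, forcing $\hat{\mathbf P}_-\to\hat{\mathbf P}_-[\infty]$. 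Because $\mathbf P_\infty=\tilde g(\hat{\mathbf P}_-[\infty])$ gives $h(\mathbf P_\infty)=h\circ\tilde g(\hat{\mathbf P}_-[\infty])=\hat{\mathbf P}_-[\infty]$, this yields $\mathbf\Sigma=g(\hat{\mathbf P}_-)\to g\circ h(\mathbf P_\infty)=\mathbf\Sigma_\infty$ in probability.

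The mean $\bm\mu=\mathbf C\hat{\mathbf x}_-$ fluctuates only through the prediction error $\hat{\mathbf e}_-$, whose spread is controlled by the vanishing interference that drives $\mathbf P_\infty\to0$; the filter then becomes exact, and since the loop $\mathbf u=\mathbf K_{\rm con}(\mathbf x_{\rm ref}-\hat{\mathbf x})$ is stable with a fixed reference, the predicted state settles at the value producing $\bm\mu\to\mathbf C\mathbf x_{\rm ref}=\bm\mu_{\rm ref}$ in probability. Combining the two steps gives $(\bm\mu,\mathbf\Sigma)\to(\bm\mu_{\rm ref},\mathbf\Sigma_\infty)$ in probability, hence weakly to the corresponding point mass. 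Finally, $P_{\bm\mu,\mathbf\Sigma}^{\mathrm{MAP}}$ is a bounded, continuous function of $(\bm\mu,\mathbf\Sigma)$, since it is assembled from the cell probabilities $P_{\bm\mu,\mathbf\Sigma}^{\mathbf b}$ (Gaussian integrals over fixed quantization cells) and the $Q$-functions of Lemma \ref{LemmaPEP}; as the integrand in \eqref{EqMAPcalprocess} is dominated by $1$, bounded convergence lets me pass the limit inside, giving $\lim P^{\mathrm{MAP}}=P_{\bm\mu_{\rm ref},\mathbf\Sigma_\infty}^{\mathrm{MAP}}$, i.e.\ \eqref{EqMAPlimitation}.

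I expect the covariance step to be the crux. The loss process randomizing $\hat{\mathbf P}_-$ is itself generated by the decoding error, so the two limits are coupled, and rare long loss bursts can in principle make $\hat{\mathbf P}_-$ unbounded (the critical-value phenomenon of \cite{KalmanBernoulliOriginal}), so that $\mathbb E[\mathbf\Sigma]$ need not converge. The rescue is that $P_{\bm\mu,\mathbf\Sigma}^{\mathrm{MAP}}\le1$ is bounded, so I only need weak convergence of the law of $(\bm\mu,\mathbf\Sigma)$, for which convergence in probability suffices: for a fixed horizon $m$ chosen so that $m$ loss-free steps land within $\epsilon$ of $\hat{\mathbf P}_-[\infty]$, the event $\{\,|\hat{\mathbf P}_--\hat{\mathbf P}_-[\infty]|>\epsilon\,\}$ is contained in $\{$a loss occurs in the last $m$ steps$\}$, whose probability is at most $1-(1-\Pr(\gamma=0))^m\to0$. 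Pinning down the stabilizing Riccati fixed point and its attractivity is the one place where detectability of $(\mathbf A,\mathbf C)$ and stabilizability of $(\mathbf A,\mathbf W^{1/2})$ are needed; the remaining continuity and bounded-convergence steps are routine, the only caveat being to keep $\mathbf\Sigma_\infty$ nonsingular so that $P_{\bm\mu,\mathbf\Sigma}^{\mathrm{MAP}}$ stays continuous at the limit point.
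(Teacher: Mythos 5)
Your proposal is correct and follows essentially the same route as the paper: the sandwich \eqref{EqMAPlimitationBounds} is Lemma \ref{LemmaMAPbounds} specialized at $\left({\bm\mu}_{\rm ref},\mathbf{\Sigma}_\infty\right)$, and the limit identity \eqref{EqMAPlimitation} is obtained from \eqref{EqMAPcalprocess} by showing that the two limits collapse the law of $\left({\bm\mu},\mathbf{\Sigma}\right)$ to a point mass --- the SNR limit forcing $\mathbf{\Sigma}\rightarrow\mathbf{\Sigma}_\infty$ through vanishing packet loss, then ${\bf P}_\infty\rightarrow 0$ forcing ${\bm\mu}\rightarrow{\bm\mu}_{\rm ref}$. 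Your write-up is in fact more careful than the paper's own proof, which simply asserts the two probability-collapse statements \eqref{EqLemma3Lim1} and \eqref{EqLemma3Lim2} and passes the limit through the integral without mentioning Riccati attractivity, bounded convergence, or continuity of $P_{{\bm\mu},\mathbf{\Sigma}}^{\rm MAP}$, all of which you supply.
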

\begin{proof}
As the SNR $1/N_0$ goes to infinity, we have
\begin{equation}\label{EqLemma3Lim1}
\begin{aligned}
\lim\limits_{\frac{1}{N_0} \rightarrow \infty} \Pr\left(\gamma = 1\right) = 1 &\Rightarrow \lim\limits_{\frac{1}{N_0} \rightarrow \infty} \Pr\left({\hat{\bf P}} = {{\bf{P}}_\infty }\right) = 1\\
&\Rightarrow \lim\limits_{\frac{1}{N_0} \rightarrow \infty} \Pr\left(\mathbf{\Sigma} = \mathbf{\Sigma}_{\infty}\right) = 1.
\end{aligned}
\end{equation}
Then, as ${{\bf{P}}_\infty } \rightarrow 0$, we have
\begin{equation}\label{EqLemma3Lim2}
\lim\limits_{\frac{1}{N_0} \rightarrow \infty, {{\bf{P}}_\infty } \rightarrow 0} \Pr\left(\left.{\bm\mu} = {\bm\mu}_{\rm{ref}}\right|\mathbf{\Sigma}_{\infty}\right) = 1.
\end{equation}
With \eqref{EqMAPcalprocess}, \eqref{EqLemma3Lim1} and \eqref{EqLemma3Lim2}, we obtain \eqref{EqMAPlimitation}. With Lemma \ref{LemmaMAPbounds} and \eqref{EqMAPlimitation}, we obtain \eqref{EqMAPlimitationBounds}.
\end{proof}

\section{Markov Chain of Control System and Bounds}

In this section, we use an infinite-state Markov chain to describe the control system with MAP decoding to derive the corresponding bounds.

The infinite-state Markov chain is shown in Fig. \ref{FigMarkov}, where the state ${S}_i$ is the event that the number of consecutive packet losses is $i$, $i=0,1,\cdots$. Hence, if a new packet losses, the state moves from $S_i$ to $S_{i+1}$, i.e., the number of consecutive packet losses increases from $i$ to $i+1$, and if a new packet arrives, the state moves from $S_i$ to $S_{0}$.
With this, the transition probabilities are $ P_{S_i}^{S_{i+1}} = \Pr\left(\left.S_{i+1}\right|S_i\right)$ and $P_{S_i}^{S_{0}} =\Pr\left(\left.S_{0}\right|S_i\right) =  1 - P_{S_i}^{S_{i+1}}$, where
$
P_{S_i}^{S_{i+1}} = P_{S_i}^{{\rm{MAP}}} = \Pr( {{{\bf{b}}_t} \ne \hat {\bf{b}}} |S_i).
$

Defining $\pi\left(S_i\right)$ as the stationary distribution of $S_i$, $i=0,1,\cdots$, we have
\begin{equation}
\begin{aligned}
\pi \left( {{S_0}} \right) &= \sum\limits_{i = 0}^\infty  {P_{{S_i}}^{{S_0}}\pi \left( {{S_i}} \right)}  = \sum\limits_{i = 0}^\infty  {\Pr \left( {\left. {{{\mathbf{b}}_t} = \widehat {\mathbf{b}}} \right|{S_i}} \right)\pi \left( {{S_i}} \right)}  \\
&= \Pr \left( {{{\mathbf{b}}_t} = \widehat {\mathbf{b}}} \right),
\end{aligned}
\end{equation}
and the MAP performance is
\begin{equation}\label{EqPMAPSi}
P^{{\rm{MAP}}} = \Pr \left( {{{\mathbf{b}}_t} \ne \widehat {\mathbf{b}}} \right) = 1 - \pi \left( {{S_0}} \right) = \frac{{P_{{S_0}}^{{S_1}}\eta }}{{1 + P_{{S_0}}^{{S_1}}\eta }},
\end{equation}
where
$
\pi \left( {{S_0}} \right){\rm{ }} = {\left( {1 + P_{{S_0}}^{{S_1}}\eta } \right)^{ - 1}}$,
$\pi \left( {{S_{i+1}}} \right) = P_{{S_{i}}}^{{S_{i+1}}}\pi \left( {{S_{i}}} \right),i = 0,1, \cdots$, and
$\eta = 1 + \sum\nolimits_{i = 1}^\infty  {\left( {\prod\nolimits_{j = 1}^i {P_{{S_{j}}}^{{S_{j+1}}}} } \right)}.
$
We derive the bounds of $P_{S_i}^{S_{i+1}}$ to obtain the bounds of \eqref{EqPMAPSi}.

\begin{figure}[t]
\setlength{\abovecaptionskip}{0.cm}
\setlength{\belowcaptionskip}{-0.cm}
  \centering{\includegraphics[scale=0.75]{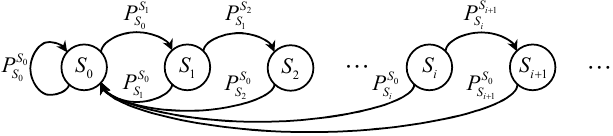}}
  \caption{Markov chain of control system with MAP decoding.}\label{FigMarkov}
  \vspace{-0em}
\end{figure}

To derive the bounds, we introduce the Loewner partial order \cite{MatrixAnalysis} in Lemma \ref{LemmaLoewner} and provide the bounds of $P_{{\bm\mu}, \mathbf{\Sigma}}^{{\bf b}}$, $P_{{\bf b}_t, {\bm\mu}, \mathbf{\Sigma}}^{{\bf b}_t \ne {\bf b}_e}$, $P_{{\bm\mu}, \mathbf{\Sigma}}^{\mathrm{LB}}$ and $P_{{\bm\mu}, \mathbf{\Sigma}}^{\mathrm{UB}}$ given $\underline{\bf\Sigma} \le {\bf{\Sigma }} \le \overline{\bf{\Sigma }}$ in Lemma \ref{LemmaBoundsGivenSigmaBound}.
\begin{lemma}\label{LemmaLoewner}
For matrices ${\mathbf X}, {\mathbf Y} \in {{\mathbb{R}}^{{a}\times {a}}}$, we write ${\mathbf X} \le {\mathbf Y}$ if ${\mathbf X}$ and ${\mathbf Y}$ are symmetric and ${\mathbf Y}-{\mathbf X}$ is positive semidefinite. Then, we have  ${\mathbf Y}^{-1} \le {\mathbf X}^{-1}$, $0\le\det{\mathbf X} \le \det{\mathbf Y}$, and ${\mathbf S}^T{\mathbf X}{\mathbf S} \le {\mathbf S}^T{\mathbf Y}{\mathbf S}$ if ${\mathbf S}\in {{\mathbb{R}}^{a\times b}}$.
%\begin{enumerate}
%  \item ${\mathbf Y}^{-1} \le {\mathbf X}^{-1}$,
%  \item $0\le\det{\mathbf X} \le \det{\mathbf Y}$,
%  \item ${\mathbf S}^T{\mathbf X}{\mathbf S} \le {\mathbf S}^T{\mathbf Y}{\mathbf S}$, if ${\mathbf S}\in {{\mathbb{R}}^{a\times b}}$.
%\end{enumerate}
\end{lemma}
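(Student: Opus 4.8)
The plan is to treat the three claims in an order that lets each build on the previous, establishing the congruence invariance ${\mathbf S}^T{\mathbf X}{\mathbf S}\le {\mathbf S}^T{\mathbf Y}{\mathbf S}$ first, since it is the elementary building block behind both the inverse and the determinant inequalities. Throughout I would use that, because ${\mathbf X}$ and ${\mathbf Y}$ arise as covariance matrices in this paper, they are positive definite (hence invertible), which is exactly what the inverse claim needs; for the determinant claim the weaker hypothesis ${\mathbf X}\ge \mathbf{0}$ suffices.

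First I would prove the congruence inequality directly from the definition. Because ${\mathbf Y}-{\mathbf X}$ is positive semidefinite, every vector ${\mathbf z}\in {\mathbb R}^b$ satisfies ${\mathbf z}^T{\mathbf S}^T({\mathbf Y}-{\mathbf X}){\mathbf S}{\mathbf z}=({\mathbf S}{\mathbf z})^T({\mathbf Y}-{\mathbf X})({\mathbf S}{\mathbf z})\ge 0$, so ${\mathbf S}^T({\mathbf Y}-{\mathbf X}){\mathbf S}$ is positive semidefinite, which is precisely ${\mathbf S}^T{\mathbf X}{\mathbf S}\le {\mathbf S}^T{\mathbf Y}{\mathbf S}$.

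For the inverse inequality I would introduce the positive-definite square root ${\mathbf X}^{1/2}$ and apply the congruence step with ${\mathbf S}={\mathbf X}^{-1/2}$ to ${\mathbf X}\le {\mathbf Y}$, obtaining ${\mathbf I}\le {\mathbf N}$ with ${\mathbf N}={\mathbf X}^{-1/2}{\mathbf Y}{\mathbf X}^{-1/2}$. Since ${\mathbf N}\ge {\mathbf I}$, all eigenvalues of ${\mathbf N}$ are at least $1$, so those of ${\mathbf N}^{-1}$ are at most $1$, giving ${\mathbf N}^{-1}\le {\mathbf I}$; a second congruence step with ${\mathbf S}={\mathbf X}^{-1/2}$ then yields ${\mathbf Y}^{-1}\le {\mathbf X}^{-1}$, because ${\mathbf N}^{-1}={\mathbf X}^{1/2}{\mathbf Y}^{-1}{\mathbf X}^{1/2}$. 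For the determinant inequality I would write $\det{\mathbf Y}=\det{\mathbf X}\cdot\det({\mathbf I}+{\mathbf M})$ with ${\mathbf M}={\mathbf X}^{-1/2}({\mathbf Y}-{\mathbf X}){\mathbf X}^{-1/2}\ge \mathbf{0}$; since the eigenvalues $\lambda_i$ of ${\mathbf M}$ are nonnegative, $\det({\mathbf I}+{\mathbf M})=\prod_i(1+\lambda_i)\ge 1$, so $\det{\mathbf X}\le\det{\mathbf Y}$, while $\det{\mathbf X}\ge 0$ follows from ${\mathbf X}\ge \mathbf{0}$ having nonnegative eigenvalues.

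The main obstacle is the boundary case in which ${\mathbf X}$ is only positive semidefinite, since the square-root factorizations above presume ${\mathbf X}$ is invertible. For the inverse claim invertibility is part of the hypothesis, so nothing is lost; but for the determinant claim I would treat a singular ${\mathbf X}$ separately, where $\det{\mathbf X}=0\le\det{\mathbf Y}$ holds trivially because ${\mathbf Y}\ge {\mathbf X}\ge \mathbf{0}$ forces ${\mathbf Y}\ge \mathbf{0}$ and hence $\det{\mathbf Y}\ge 0$, or equivalently recover it by a continuity argument applied to ${\mathbf X}+\varepsilon{\mathbf I}$ as $\varepsilon\downarrow 0$.
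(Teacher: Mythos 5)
Your proof is correct, but it takes a genuinely different route from the paper for a simple reason: the paper offers no proof of this lemma at all---it states the three facts and delegates them to the reference \cite{MatrixAnalysis} on matrix analysis. Your argument (congruence invariance proved directly from the quadratic form, then conjugation by ${\mathbf X}^{-1/2}$ to reduce both the inverse inequality and the determinant inequality to eigenvalue statements about ${\mathbf X}^{-1/2}{\mathbf Y}{\mathbf X}^{-1/2}$) is the standard textbook derivation of exactly those cited facts, so it supplies what the paper outsources. It also buys something the paper's statement glosses over: as literally written, the hypotheses (${\mathbf X},{\mathbf Y}$ symmetric, ${\mathbf Y}-{\mathbf X}$ positive semidefinite) do not suffice---take ${\mathbf X}=-{\mathbf I}$ and ${\mathbf Y}={\mathbf I}$, for which ${\mathbf Y}^{-1}\le{\mathbf X}^{-1}$ fails and $\det{\mathbf X}$ can be negative---so the inverse claim genuinely needs ${\mathbf X}$ positive definite and the determinant claim needs ${\mathbf X}\ge\mathbf{0}$. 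You correctly identify this gap, justify the stronger hypotheses from the context (the matrices to which the lemma is applied are covariances of the form ${\bf C}{\bf P}{\bf C}^T+{\bf V}$, hence positive definite), and patch the singular boundary case of the determinant claim either trivially or by a continuity argument. In short: same mathematical content as the cited standard results, but self-contained and with the implicit positivity hypotheses made explicit.
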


\begin{lemma}\label{LemmaBoundsGivenSigmaBound}
Given the range of ${\bf \Sigma}$ as $\underline{\bf\Sigma} \le {\bf{\Sigma }} \le \overline{\bf{\Sigma }}$, $P_{{\bm\mu}, \mathbf{\Sigma}}^{{\bf b}}$ is bounded as
\begin{equation}\label{EqLemma41}
\underline{P_{{\bm\mu}, \mathbf{\Sigma}}^{{\bf b}}} \le P_{{\bm\mu}, \mathbf{\Sigma}}^{{\bf b}} \le \overline{P_{{\bm\mu}, \mathbf{\Sigma}}^{{\bf b}}},
\end{equation}
$P_{{\bf b}_t, {\bm\mu}, \mathbf{\Sigma}}^{{\bf b}_t \ne {\bf b}_e}$ is bounded as
\begin{equation}\label{EqLemma42}
\underline{P_{{\bf b}_t, {\bm\mu}, \mathbf{\Sigma}}^{{\bf b}_t \ne {\bf b}_e}} \le P_{{\bf b}_t, {\bm\mu}, \mathbf{\Sigma}}^{{\bf b}_t \ne {\bf b}_e} \le \overline{P_{{\bf b}_t, {\bm\mu}, \mathbf{\Sigma}}^{{\bf b}_t \ne {\bf b}_e}},
\end{equation}
the lower bound of $P_{{\bm\mu}, \mathbf{\Sigma}}^{\mathrm{LB}}$ is
\begin{equation}\label{EqLemma43}
P_{{\bm\mu}, \mathbf{\Sigma}}^{\mathrm{LB}}\ge\underline{P_{{\bm\mu}, \mathbf{\Sigma}}^{\mathrm{LB}}} =
\sum\limits_{{\bf{b}}_t} {\underline{P_{{\bm\mu}, \mathbf{\Sigma}}^{{\bf b}_t}}\mathop {\max }\limits_{{{\bf{b}}_e} \in {{\left\{ {0,1} \right\}}^{{N_y}n}}\backslash {{\bf{b}}_t}} \underline{P_{{\bf b}_t, {\bm\mu}, \mathbf{\Sigma}}^{{\bf b}_t \ne {\bf b}_e}}
}
\end{equation}
and the upper bound of $P_{{\bm\mu}, \mathbf{\Sigma}}^{\mathrm{UB}}$ is
\begin{equation}\label{EqLemma44}
P_{{\bm\mu}, \mathbf{\Sigma}}^{\mathrm{UB}} \le \overline{P_{{\bm\mu}, \mathbf{\Sigma}}^{\mathrm{UB}}} =
\sum\limits_{{\bf{b}}_t} {\overline{P_{{\bm\mu}, \mathbf{\Sigma}}^{{\bf b}_t}}\sum\limits_{{{\bf{b}}_e} \in {{\left\{ {0,1} \right\}}^{{N_y}n}}\backslash {{\bf{b}}_t}} \overline{P_{{\bf b}_t, {\bm\mu}, \mathbf{\Sigma}}^{{\bf b}_t \ne {\bf b}_e}}},
\end{equation}
where
\begin{equation*}
\begin{aligned}
\underline{P_{{\bm\mu}, \mathbf{\Sigma}}^{{\bf b}}} &= {\left( {\frac{{\det \underline {\bf{\Sigma }} }}{{\det \overline {\bf{\Sigma }} }}} \right)^{\frac{1}{2}}}P_{\bm\mu ,\underline {\bf{\Sigma }} }^{\bf{b}},
\overline{P_{{\bm\mu}, \mathbf{\Sigma}}^{{\bf b}}} = {\left( {\frac{{\det \overline {\bf{\Sigma }} }}{{\det \underline {\bf{\Sigma }} }}} \right)^{\frac{1}{2}}}P_{\bm\mu ,\overline {\bf{\Sigma }} }^{\bf{b}}\\
\underline{P_{{\bf b}_t, {\bm\mu}, \mathbf{\Sigma}}^{{\bf b}_t \ne {\bf b}_e}} &= Q\left( {\sqrt {\frac{{2{d_{{{\bf{c}}_t},{{\bf{c}}_e}}}}}{{{N_0}}}}  + \sqrt {\frac{{{N_0}}}{{8{d_{{{\bf{c}}_t},{{\bf{c}}_e}}}}}} \ln \left(\frac{{P_{\bm\mu ,\overline {\bf{\Sigma }} }^{{{\bf{b}}_t}}\det \overline {\bf{\Sigma }} }}{{P_{\bm\mu ,\underline {\bf{\Sigma }} }^{{{\bf{b}}_e}}\det \underline {\bf{\Sigma }} }}  \right)} \right)\\
\overline{P_{{\bf b}_t, {\bm\mu}, \mathbf{\Sigma}}^{{\bf b}_t \ne {\bf b}_e}} &= Q\left( {\sqrt {\frac{{2{d_{{{\bf{c}}_t},{{\bf{c}}_e}}}}}{{{N_0}}}}  + \sqrt {\frac{{{N_0}}}{{8{d_{{{\bf{c}}_t},{{\bf{c}}_e}}}}}} \ln \left(\frac{{P_{\bm\mu ,\underline {\bf{\Sigma }} }^{{{\bf{b}}_t}}\det \underline {\bf{\Sigma }} }}{{P_{\bm\mu ,\overline {\bf{\Sigma }} }^{{{\bf{b}}_e}}\det \overline {\bf{\Sigma }} }}  \right)} \right)
\end{aligned}
\end{equation*}
\end{lemma}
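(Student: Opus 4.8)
The plan is to reduce all four inequalities to a single structural observation: the integration region $\left\{\mathbf{y}\mid \mathbf{b}^{\beta(\mathbf{y})}=\mathbf{b}\right\}$ that defines $P_{{\bm\mu}, \mathbf{\Sigma}}^{{\bf b}}$ depends only on $\mathbf{b}$ and is independent of $\mathbf{\Sigma}$. Because this region is frozen, any pointwise bound on the Gaussian integrand $p(\mathbf{y})$ integrates directly into a bound on $P_{{\bm\mu}, \mathbf{\Sigma}}^{{\bf b}}$. Hence the central step is to sandwich $p(\mathbf{y})$, for an arbitrary $\mathbf{\Sigma}$ with $\underline{\bf\Sigma}\le\mathbf{\Sigma}\le\overline{\bf\Sigma}$, between scaled copies of the densities built from $\underline{\bf\Sigma}$ and $\overline{\bf\Sigma}$, which immediately yields \eqref{EqLemma41}; everything else is propagation.

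For the density bound I would invoke all three conclusions of Lemma \ref{LemmaLoewner} on the two factors of $p(\mathbf{y})$ separately. From $\mathbf{\Sigma}\le\overline{\bf\Sigma}$ the inversion-reversal gives $\overline{\bf\Sigma}^{-1}\le\mathbf{\Sigma}^{-1}$, and the congruence property $\mathbf{S}^T\mathbf{X}\mathbf{S}\le\mathbf{S}^T\mathbf{Y}\mathbf{S}$ with $\mathbf{S}=\mathbf{y}-{\bm\mu}$ then bounds the quadratic form from below, so the exponential is upper-bounded by replacing $\mathbf{\Sigma}$ with $\overline{\bf\Sigma}$; simultaneously the determinant monotonicity $\det\underline{\bf\Sigma}\le\det\mathbf{\Sigma}$ upper-bounds the normalizer $(\det\mathbf{\Sigma})^{-1/2}$ by $(\det\underline{\bf\Sigma})^{-1/2}$. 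The crucial and slightly counterintuitive point is that the two factors are controlled by \emph{opposite} extreme matrices: $\overline{\bf\Sigma}$ sits in the exponent but $\underline{\bf\Sigma}$ sits in the determinant. Regrouping the product as $\left(\det\overline{\bf\Sigma}/\det\underline{\bf\Sigma}\right)^{1/2}$ times the genuine density with covariance $\overline{\bf\Sigma}$ and integrating over the fixed region gives $P_{{\bm\mu}, \mathbf{\Sigma}}^{{\bf b}}\le\overline{P_{{\bm\mu}, \mathbf{\Sigma}}^{{\bf b}}}$; swapping the roles of $\underline{\bf\Sigma}$ and $\overline{\bf\Sigma}$ gives the lower bound.

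The PEP bounds \eqref{EqLemma42} then follow from Lemma \ref{LemmaPEP} and the strict monotonicity of $Q$. Since $Q$ is decreasing, the smallest PEP corresponds to the largest argument, i.e. to maximizing the ratio $P_{{\bm\mu}, \mathbf{\Sigma}}^{{\bf b}_t}/P_{{\bm\mu}, \mathbf{\Sigma}}^{{\bf b}_e}$ inside the logarithm; substituting the upper bound for the numerator and the lower bound for the denominator from \eqref{EqLemma41} makes the two $\left(\det\overline{\bf\Sigma}/\det\underline{\bf\Sigma}\right)^{1/2}$ prefactors combine into the single weighting $\det\overline{\bf\Sigma}/\det\underline{\bf\Sigma}$, producing $\underline{P_{{\bf b}_t, {\bm\mu}, \mathbf{\Sigma}}^{{\bf b}_t \ne {\bf b}_e}}$, and the mirror substitution yields $\overline{P_{{\bf b}_t, {\bm\mu}, \mathbf{\Sigma}}^{{\bf b}_t \ne {\bf b}_e}}$. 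Finally, \eqref{EqLemma43} and \eqref{EqLemma44} come from pushing these elementary bounds through the defining expressions in Lemma \ref{LemmaMAPbounds}: all terms are nonnegative, so lower-bounding each factor lower-bounds every summand, and the inner maximum is preserved because $f\ge g$ pointwise forces $\max f\ge\max g$; the upper bound is identical with the summation replacing the maximum.

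The main obstacle I anticipate is bookkeeping rather than any deep inequality: one must keep the exponent-matrix and determinant-matrix bounds pointing in opposite directions and verify that the residual scalar prefactor is exactly $\left(\det\overline{\bf\Sigma}/\det\underline{\bf\Sigma}\right)^{1/2}$, so that after the logarithm in the PEP the two determinant ratios fuse into the single power $\det\overline{\bf\Sigma}/\det\underline{\bf\Sigma}$ displayed in the lemma. A secondary point needing care is confirming that $\underline{\bf\Sigma}$ and $\overline{\bf\Sigma}$ are themselves valid covariance matrices, i.e. symmetric and positive definite, so that Lemma \ref{LemmaLoewner} applies and the scaled densities are well defined; this is inherited from the structure $\mathbf{\Sigma}=g({\hat{\bf P}}_-)=\mathbf{C}{\hat{\bf P}}_-\mathbf{C}^T+\mathbf{V}$ with $\mathbf{V}$ a nonsingular noise covariance.
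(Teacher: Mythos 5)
Your proposal is correct and takes essentially the same route as the paper's own proof: the paper likewise sandwiches $p(\mathbf{y})$ between scaled Gaussian densities via Lemma \ref{LemmaLoewner} (with the "opposite extremes" pairing you highlight --- $\overline{\bf\Sigma}$ in the exponent and $\underline{\bf\Sigma}$ in the normalizer for the upper bound, and vice versa), integrates over the $\mathbf{\Sigma}$-independent quantization region to obtain \eqref{EqLemma41}, merges the two square-root determinant factors into the single ratio inside the logarithm of the $Q$ function to obtain \eqref{EqLemma42}, and then propagates these bounds through the expressions of Lemma \ref{LemmaMAPbounds} to obtain \eqref{EqLemma43} and \eqref{EqLemma44}. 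Your write-up only adds details the paper leaves implicit (the congruence step for the quadratic form, the monotonicity of $Q$, and the positive definiteness of $\underline{\bf\Sigma}$ and $\overline{\bf\Sigma}$), so there is nothing to correct.
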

\begin{proof}
With Lemma \ref{LemmaLoewner} and $\underline{\bf\Sigma} \le {\bf{\Sigma }} \le \overline{\bf{\Sigma }}$, we have $\underline{p\left(\mathbf{y}\right)} \le p\left(\mathbf{y}\right) \le
\overline{p\left(\mathbf{y}\right)}$ with
\begin{equation*}
\begin{aligned}
\underline{p\left(\mathbf{y}\right)} &= \frac{1}{{{{\left( {2\pi } \right)}^{\frac{{{N_y}}}{2}}}{{\left( {\det \overline {\bf{\Sigma }} } \right)}^{\frac{1}{2}}}}}\exp \left( { - \frac{1}{2}{{\left( {{\bf{y}} - \bm\mu } \right)}^T}{{\underline {\bf{\Sigma }} }^{ - 1}}\left( {{\bf{y}} - \bm\mu } \right)} \right) \\
\overline{p\left(\mathbf{y}\right)} &= \frac{1}{{{{\left( {2\pi } \right)}^{\frac{{{N_y}}}{2}}}{{\left( {\det \underline {\bf{\Sigma }} } \right)}^{\frac{1}{2}}}}}\exp \left( { - \frac{1}{2}{{\left( {{\bf{y}} - \bm\mu } \right)}^T}{{\overline {\bf{\Sigma }} }^{ - 1}}\left( {{\bf{y}} - \bm\mu } \right)} \right)
\end{aligned}
\end{equation*}
Then, we have
\begin{align}
\underline{P_{{\bm\mu}, \mathbf{\Sigma}}^{{\bf b}}} =
\int\nolimits_{{\bf{y}} \in \left\{ {\left. {\bf{y}} \right|{\bf{b}} = {{\bf{b}}^{\beta \left( {\bf{y}} \right)}}} \right\}} \underline{p\left( {\bf{y}} \right)}\mathrm{d}{\bf{y}} = {\left( {\frac{{\det \underline {\bf{\Sigma }} }}{{\det \overline {\bf{\Sigma }} }}} \right)^{\frac{1}{2}}}P_{\bm\mu ,\underline {\bf{\Sigma }} }^{\bf{b}}, \\
\overline{P_{{\bm\mu}, \mathbf{\Sigma}}^{{\bf b}}} =
\int\nolimits_{{\bf{y}} \in \left\{ {\left. {\bf{y}} \right|{\bf{b}} = {{\bf{b}}^{\beta \left( {\bf{y}} \right)}}} \right\}} \overline{p\left( {\bf{y}} \right)}\mathrm{d}{\bf{y}} = {\left( {\frac{{\det \overline {\bf{\Sigma }} }}{{\det \underline {\bf{\Sigma }} }}} \right)^{\frac{1}{2}}}P_{\bm\mu ,\overline {\bf{\Sigma }} }^{\bf{b}}.
\end{align}
Hence, \eqref{EqLemma41} is proven.

With \eqref{EqLemma41}, we derive the bounds of ${\frac{{P_{{\bm\mu}, \mathbf{\Sigma}}^{{\bf b}_t}}}{{P_{{\bm\mu}, \mathbf{\Sigma}}^{{\bf b}_e}}}}$ as
\begin{equation}\label{EqLemma4Proof3}
\frac{{P_{\bm\mu ,\underline {\bf{\Sigma }} }^{{{\bf{b}}_t}}\det \underline {\bf{\Sigma }} }}{{P_{\bm\mu ,\overline {\bf{\Sigma }} }^{{{\bf{b}}_e}}\det \overline {\bf{\Sigma }} }} = {\frac{\underline{P_{{\bm\mu}, \mathbf{\Sigma}}^{{\bf b}_t}}}{\overline{P_{{\bm\mu}, \mathbf{\Sigma}}^{{\bf b}_e}}}}  \le {\frac{{P_{{\bm\mu}, \mathbf{\Sigma}}^{{\bf b}_t}}}{{P_{{\bm\mu}, \mathbf{\Sigma}}^{{\bf b}_e}}}} \le {\frac{\overline{P_{{\bm\mu}, \mathbf{\Sigma}}^{{\bf b}_t}}}{\underline{P_{{\bm\mu}, \mathbf{\Sigma}}^{{\bf b}_e}}}} = \frac{{P_{\bm\mu ,\overline {\bf{\Sigma }} }^{{{\bf{b}}_t}}\det \overline {\bf{\Sigma }} }}{{P_{\bm\mu ,\underline {\bf{\Sigma }} }^{{{\bf{b}}_e}}\det \underline {\bf{\Sigma }} }}
\end{equation}
Then, \eqref{EqLemma42} is proven by introducing \eqref{EqLemma4Proof3} into \eqref{EqPEP}.
With \eqref{EqLemma41} and \eqref{EqLemma42}, \eqref{EqLemma43} and \eqref{EqLemma44} are derived directly.
\end{proof}

As shown in \cite{5398830}, we have ${\bf{P}}_\infty \le {\hat{\bf P}} \le {\bf M} = {\bf C}^{-1}{\bf V}{\bf C}^{-T}$ if ${\bf{P}}_\infty \le {\hat{\bf P}}\left[0\right]$ and $\gamma = 1$. With Lemma \ref{LemmaLoewner}, ${\bf\Sigma}_{S_i}$ of $S_i$ is bounded as $\underline{{\bf\Sigma}_{S_i}} \le {\bf\Sigma}_{S_i} \le \overline{{\bf\Sigma}_{S_i}}$ with
$\underline{{\bf\Sigma}_{S_i}} = g \circ h^{i+1} \left({{\mathbf{P}}_\infty }\right)$ and
$\overline{{\bf\Sigma}_{S_i}} = g \circ h^{i+1} \left({\bf M}\right)$.
With Lemma \ref{LemmaBoundsGivenSigmaBound}, $P_{{S_i}}^{{S_{i+1}}}$ is bounded in Lemma \ref{LemmaPSibounds}.

\begin{lemma}\label{LemmaPSibounds}
Given $\underline{{\bf\Sigma}_{S_i}} \le {\bf\Sigma}_{S_i} \le \overline{{\bf\Sigma}_{S_i}}$,
$P_{{S_i}}^{{S_{i+1}}}$ is bounded as
\begin{equation}
\underline{P_{{S_i}}^{{S_{i+1}}}}  \le P_{{S_i}}^{{S_{i+1}}} \le \overline{P_{{S_i}}^{{S_{i+1}}}},
\end{equation}
where
%$\underline{P_{{S_i}}^{{S_{i+1}}}} = \mathop {\min }\limits_{\bm\mu  \in \left[ { - Z,Z} \right)} \underline {P_{\bm\mu ,{{\bf\Sigma}_{S_i}}}^{{\rm{LB}}}}$ and $\overline{P_{{S_i}}^{{S_{i+1}}}} = \mathop {\max }\limits_{\bm\mu  \in \left[ { - Z,Z} \right)} \overline {P_{\bm\mu ,{{\bf\Sigma}_{S_i}}}^{{\rm{UB}}}}$.
\begin{align}
\underline{P_{{S_i}}^{{S_{i+1}}}} &= \mathop {\min }\limits_{\bm\mu  \in \left[ { - Z,Z} \right)} \underline {P_{\bm\mu ,{{\bf\Sigma}_{S_i}}}^{{\rm{LB}}}}, \\
\overline{P_{{S_i}}^{{S_{i+1}}}} &= \mathop {\max }\limits_{\bm\mu  \in \left[ { - Z,Z} \right)} \overline {P_{\bm\mu ,{{\bf\Sigma}_{S_i}}}^{{\rm{UB}}}}.
\end{align}
\end{lemma}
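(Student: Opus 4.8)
The plan is to recognize that the transition probability $P_{S_i}^{S_{i+1}} = P_{S_i}^{\rm MAP}$ is, by definition, the conditional MAP error probability given state $S_i$, and therefore admits the same integral representation as \eqref{EqMAPcalprocess} but conditioned on $S_i$:
\begin{equation*}
P_{S_i}^{S_{i+1}} = \int\nolimits_{\bm\mu, \mathbf{\Sigma}} P_{\bm\mu, \mathbf{\Sigma}}^{{\rm{MAP}}} \, p\left(\left.\bm\mu, \mathbf{\Sigma}\right|S_i\right) \, \mathrm{d}\bm\mu \, \mathrm{d}\mathbf{\Sigma}.
\end{equation*}
The discussion preceding the lemma already establishes that the conditional density of $\mathbf{\Sigma}$ is supported on the Loewner interval $\underline{{\bf\Sigma}_{S_i}} \le \mathbf{\Sigma} \le \overline{{\bf\Sigma}_{S_i}}$, while $\bm\mu = {\bf C}{\hat{\bf x}}_-$ ranges over the quantization support $\left[-Z, Z\right)$. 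The whole proof is then a pointwise bounding of the integrand followed by an extremization over $\bm\mu$.

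For the lower bound I would chain the two preceding lemmas on the integrand: Lemma \ref{LemmaMAPbounds} gives $P_{\bm\mu, \mathbf{\Sigma}}^{{\rm{MAP}}} \ge P_{\bm\mu, \mathbf{\Sigma}}^{\mathrm{LB}}$, and \eqref{EqLemma43} of Lemma \ref{LemmaBoundsGivenSigmaBound}, applied with the range $\underline{{\bf\Sigma}_{S_i}} \le \mathbf{\Sigma} \le \overline{{\bf\Sigma}_{S_i}}$, gives $P_{\bm\mu, \mathbf{\Sigma}}^{\mathrm{LB}} \ge \underline{P_{\bm\mu, {\bf\Sigma}_{S_i}}^{\mathrm{LB}}}$. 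The decisive observation is that $\underline{P_{\bm\mu, {\bf\Sigma}_{S_i}}^{\mathrm{LB}}}$ is assembled purely from the fixed endpoints $\underline{{\bf\Sigma}_{S_i}}$ and $\overline{{\bf\Sigma}_{S_i}}$, so it does not depend on the realized $\mathbf{\Sigma}$; the $\mathbf{\Sigma}$-integral therefore collapses against $p\left(\left.\mathbf{\Sigma}\right|S_i\right)$ and leaves an integral in $\bm\mu$ alone. Lower-bounding the surviving integrand by its minimum over $\bm\mu \in \left[-Z, Z\right)$ and using $\int p\left(\left.\bm\mu\right|S_i\right)\mathrm{d}\bm\mu = 1$ yields $P_{S_i}^{S_{i+1}} \ge \min_{\bm\mu \in [-Z,Z)} \underline{P_{\bm\mu, {\bf\Sigma}_{S_i}}^{\mathrm{LB}}} = \underline{P_{S_i}^{S_{i+1}}}$. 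The upper bound is the mirror image: Lemma \ref{LemmaMAPbounds} gives $P_{\bm\mu, \mathbf{\Sigma}}^{{\rm{MAP}}} \le P_{\bm\mu, \mathbf{\Sigma}}^{\mathrm{UB}}$, \eqref{EqLemma44} gives $P_{\bm\mu, \mathbf{\Sigma}}^{\mathrm{UB}} \le \overline{P_{\bm\mu, {\bf\Sigma}_{S_i}}^{\mathrm{UB}}}$, and maximizing over $\bm\mu$ after collapsing the $\mathbf{\Sigma}$-integral delivers $\overline{P_{S_i}^{S_{i+1}}}$.

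The step I expect to be the main obstacle is justifying the reduction over $\bm\mu$, namely that replacing the conditional expectation by $\min_{\bm\mu}$ (resp. $\max_{\bm\mu}$) over $\left[-Z, Z\right)$ is legitimate. This is the standard ``expectation dominates its minimum'' inequality, but it is valid only when the conditional density $p\left(\left.\bm\mu\right|S_i\right)$ is supported inside $\left[-Z, Z\right)$; otherwise the extremum taken over the quantization range need not bound the true integral. I would therefore lean on the stability assumption of Theorem \ref{TheoremMAPlimit} to argue that the predicted output $\bm\mu = {\bf C}{\hat{\bf x}}_-$ remains inside the quantizer's operating range, so that the worst-case over $\left[-Z, Z\right)$ is a genuine envelope. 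By contrast, the $\mathbf{\Sigma}$-side is routine: the monotone dependence of the bin probabilities and determinant ratios on $\mathbf{\Sigma}$ feeds the Loewner ordering $\underline{{\bf\Sigma}_{S_i}} \le \mathbf{\Sigma} \le \overline{{\bf\Sigma}_{S_i}}$ directly into Lemma \ref{LemmaBoundsGivenSigmaBound}, so no additional work is needed there. Once the $\bm\mu$ support condition is granted, the two chains above close the proof with no further computation.
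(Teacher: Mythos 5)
Your proposal is correct and takes essentially the same route as the paper: the same integral representation of $P_{S_i}^{S_{i+1}}$ conditioned on $S_i$, the same pointwise bounding of the integrand via Lemma \ref{LemmaMAPbounds} and \eqref{EqLemma43}--\eqref{EqLemma44} of Lemma \ref{LemmaBoundsGivenSigmaBound}, and the same collapse of the $\mathbf{\Sigma}$-integral followed by extremization over $\bm\mu \in \left[-Z, Z\right)$. Your extra care about the support of $p\left(\left.\bm\mu\right|S_i\right)$ lying inside $\left[-Z, Z\right)$ is a condition the paper's proof silently assumes, so it is a refinement of the same argument rather than a different one.
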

\begin{proof}
By the definition of $P_{S_i}^{S_{i+1}}$, we have
\begin{equation}
\begin{aligned}
P_{S_i}^{S_{i+1}} &= \Pr\left(\left. {{{\bf{b}}_t} \ne \hat {\bf{b}}} \right|S_i\right) \\
&= \int\nolimits_{{\bm\mu}, {\bf\Sigma}_{S_i}} P_{{\bm\mu}, {\bf\Sigma}_{S_i}}^{{\rm{MAP}}}\cdot p\left(\left. {{\bm\mu}, {\bf\Sigma}_{S_i}}\right|S_i\right)\cdot \mathrm{d}{\bm\mu} \mathrm{d}{\bf\Sigma}_{S_i}  \\
\end{aligned}
\end{equation}
With Lemma \ref{LemmaBoundsGivenSigmaBound}, $P_{S_i}^{S_{i+1}}$ is bounded as
\begin{equation}
\begin{aligned}
P_{S_i}^{S_{i+1}} &\le \int\nolimits_{{\bm\mu}, {\bf\Sigma}_{S_i}} \overline {P_{\bm\mu ,{{\bf\Sigma}_{S_i}}}^{{\rm{UB}}}}\cdot p\left(\left. {{\bm\mu}, {\bf\Sigma}_{S_i}}\right|S_i\right)\cdot \mathrm{d}{\bm\mu} \mathrm{d}{\bf\Sigma}_{S_i} \\
&\le \mathop {\max }\limits_{\bm\mu  \in \left[ { - Z,Z} \right)} \overline {P_{\bm\mu ,{{\bf\Sigma}_{S_i}}}^{{\rm{UB}}}} = \overline{P_{{S_i}}^{{S_{i+1}}}},
\end{aligned}
\end{equation}
\begin{equation}
\begin{aligned}
P_{S_i}^{S_{i+1}} &\ge \int\nolimits_{{\bm\mu}, {\bf\Sigma}_{S_i}} \underline {P_{\bm\mu ,{{\bf\Sigma}_{S_i}}}^{{\rm{LB}}}}\cdot p\left(\left. {{\bm\mu}, {\bf\Sigma}_{S_i}}\right|S_i\right)\cdot \mathrm{d}{\bm\mu} \mathrm{d}{\bf\Sigma}_{S_i} \\
&\ge \mathop {\min }\limits_{\bm\mu  \in \left[ { - Z,Z} \right)} \underline {P_{\bm\mu ,{{\bf\Sigma}_{S_i}}}^{{\rm{LB}}}}= \underline{P_{{S_i}}^{{S_{i+1}}}}.
\end{aligned}
\end{equation}
Thus, Lemma \ref{LemmaPSibounds} is proven.
\end{proof}

By Lemma \ref{LemmaPSibounds}, \eqref{EqPMAPSi} is bounded in Theorem \ref{TheoremMAPMarkov}.
\begin{theorem}\label{TheoremMAPMarkov}
$P^{{\rm{MAP}}}$ is bounded as
\begin{equation}\label{EqTheorem2Bounds}
\frac{\underline{P_{{S_0}}^{{S_1}}}\underline\eta}{{1 + \overline{P_{{S_0}}^{{S_1}}}}\overline\eta } \le P^{{\rm{MAP}}}  \le \frac{\overline{P_{{S_0}}^{{S_1}}}\overline\eta}{{1 + \underline{P_{{S_0}}^{{S_1}}}}\underline\eta },
\end{equation}
where $\underline\eta = 1 + \sum\nolimits_{i = 1}^\infty  {\left( {\prod\nolimits_{j = 1}^i \underline{P_{{S_{j}}}^{{S_{j+1}}}} } \right)}$ and $\overline\eta = 1 + \sum\nolimits_{i = 1}^\infty  {\left( {\prod\nolimits_{j = 1}^i \overline{P_{{S_{j}}}^{{S_{j+1}}}} } \right)}$.
\end{theorem}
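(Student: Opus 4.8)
The plan is to exploit the closed form \eqref{EqPMAPSi}, namely $P^{{\rm{MAP}}} = \frac{P_{S_0}^{S_1}\eta}{1+P_{S_0}^{S_1}\eta}$, and to bound its numerator and denominator separately using the termwise transition-probability bounds supplied by Lemma \ref{LemmaPSibounds}. The only quantities appearing in \eqref{EqPMAPSi} are $P_{S_0}^{S_1}$ and the series $\eta = 1 + \sum_{i=1}^\infty \prod_{j=1}^i P_{S_j}^{S_{j+1}}$, so once $\eta$ is sandwiched between $\underline\eta$ and $\overline\eta$ the stated bounds will follow from elementary monotonicity of the resulting ratio.

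First I would establish $\underline\eta \le \eta \le \overline\eta$. Since every transition probability is nonnegative and Lemma \ref{LemmaPSibounds} gives $\underline{P_{S_j}^{S_{j+1}}} \le P_{S_j}^{S_{j+1}} \le \overline{P_{S_j}^{S_{j+1}}}$, each finite product of nonnegative factors is monotone in every factor, so
\begin{equation*}
\prod_{j=1}^i \underline{P_{S_j}^{S_{j+1}}} \le \prod_{j=1}^i P_{S_j}^{S_{j+1}} \le \prod_{j=1}^i \overline{P_{S_j}^{S_{j+1}}}.
\end{equation*}
Summing over $i \ge 1$ and adding the $i=0$ term equal to $1$ yields $\underline\eta \le \eta \le \overline\eta$ directly from their definitions.

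Next I would combine these with the bounds on $P_{S_0}^{S_1}$. Because $P_{S_0}^{S_1} \ge 0$ and $\eta \ge 1 > 0$, the product obeys $\underline{P_{S_0}^{S_1}}\,\underline\eta \le P_{S_0}^{S_1}\eta \le \overline{P_{S_0}^{S_1}}\,\overline\eta$. To obtain the upper bound I take the largest admissible numerator over the smallest admissible denominator, giving $P^{{\rm{MAP}}} \le \frac{\overline{P_{S_0}^{S_1}}\,\overline\eta}{1+\underline{P_{S_0}^{S_1}}\,\underline\eta}$; for the lower bound I take the smallest numerator over the largest denominator, giving $P^{{\rm{MAP}}} \ge \frac{\underline{P_{S_0}^{S_1}}\,\underline\eta}{1+\overline{P_{S_0}^{S_1}}\,\overline\eta}$. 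These are exactly \eqref{EqTheorem2Bounds}. (One may note that the tighter pair $\frac{\underline{P_{S_0}^{S_1}}\,\underline\eta}{1+\underline{P_{S_0}^{S_1}}\,\underline\eta} \le P^{{\rm{MAP}}} \le \frac{\overline{P_{S_0}^{S_1}}\,\overline\eta}{1+\overline{P_{S_0}^{S_1}}\,\overline\eta}$ also holds by monotonicity of $x\mapsto x/(1+x)$; the stated decoupled bounds are the looser but simpler consequences of bounding numerator and denominator independently.)

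The main obstacle is not the algebra but justifying that the series defining $\eta$, $\underline\eta$ and $\overline\eta$ converge, so that the stated fractions are finite and the stationary distribution in \eqref{EqPMAPSi} is well defined. Convergence requires $\prod_{j=1}^i \overline{P_{S_j}^{S_{j+1}}} \to 0$, i.e. the upper transition bounds must remain bounded away from $1$ as $i$ grows. This is where the model structure enters: under the stability assumption the iteration of $h$ admits a fixed point (the discrete Lyapunov solution), so $\overline{{\bf\Sigma}_{S_i}} = g\circ h^{i+1}\!\left({\bf M}\right)$ converges to a finite limiting covariance and $\overline{P_{S_i}^{S_{i+1}}}$ tends to the error probability for that limiting, non-degenerate prior. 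For a code with nonzero minimum distance this limit is strictly below $1$, hence the products decay geometrically and $\overline\eta < \infty$, with $\underline\eta \le \eta \le \overline\eta$ inheriting finiteness. I would verify this limiting behaviour of $h^{i+1}$ and the strict sub-unity of the limiting transition bound as the one genuinely non-routine ingredient.
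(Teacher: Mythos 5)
Your core algebra is correct and is exactly the paper's (implicit) route: the paper states Theorem~\ref{TheoremMAPMarkov} as a direct consequence of \eqref{EqPMAPSi} and Lemma~\ref{LemmaPSibounds}, and your termwise sandwich $\underline\eta \le \eta \le \overline\eta$ followed by the decoupled numerator/denominator bounds (enlarge the numerator, shrink the denominator) is precisely that argument, including the correct side observation that monotonicity of $x \mapsto x/(1+x)$ would yield a tighter coupled pair than the stated bounds.

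However, your final paragraph --- the convergence justification you single out as ``the one genuinely non-routine ingredient'' --- rests on a premise that is false in this paper's setting. You claim that stability makes the iteration of $h$ converge to a fixed point (a discrete Lyapunov solution), so that $\overline{{\bf\Sigma}_{S_i}} = g\circ h^{i+1}\left({\bf M}\right)$ tends to a finite limiting covariance. In a WNCS the plant is typically open-loop unstable and only closed-loop stable: the paper's simulations use $A=-1.13$ and a matrix ${\bf A}$ with eigenvalues $1.25$ and $1.1$, so $h\left({\bf X}\right)={\bf A}{\bf X}{\bf A}^T+{\bf W}$ has no finite fixed point and $h^{i+1}\left({\bf M}\right)$ diverges; indeed the paper's Remark 1 asserts the opposite of your claim, namely that the trace of ${\bf\Sigma}_{S_i}$ goes to infinity as $i\to\infty$. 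The correct route to finiteness of $\overline\eta$ is the one Remark 1 and Theorem~\ref{TheoremMAPApprox} take: as the prior flattens, $P_{S_i}^{S_{i+1}}$ and its upper bound approach the ML (union-bound) error probability $\overline P$, so the products decay geometrically provided $\overline P < 1$ (equivalently, at sufficiently high SNR); convergence is a consequence of that condition on the code and channel, not of Lyapunov stability of ${\bf A}$.
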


\section{Approximation of Bounds of MAP Decoding}

In this section, we first provide the upper bound $\overline P$ of $P_{{S_{i}}}^{{S_{i+1}}}$ and simplify Theorem \ref{TheoremMAPMarkov} as Theorem \ref{TheoremMAPApprox}.
Then, if $\overline P \ll 1$, the bounds in Theorem \ref{TheoremMAPApprox} are approximated in Corollary \ref{CorollaryMAPApproxPS0}.

\begin{theorem}\label{TheoremMAPApprox}
Given the upper bound $\overline P$ of $P_{{S_{i}}}^{{S_{i+1}}}$, i.e., $P_{{S_{i}}}^{{S_{i+1}}} \le \overline P$, $i=0,1,\cdots$, \eqref{EqTheorem2Bounds} is simplified to
\begin{equation}\label{EqTheorem2Bounds2}
\left(1 - \overline P\right)\underline{P_{{S_0}}^{{S_1}}} \le P^{{\rm{MAP}}} \le \left(1 - \overline P\right)^{-1}\overline{P_{{S_0}}^{{S_1}}}.
\end{equation}
\end{theorem}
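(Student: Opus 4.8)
The plan is to start from the two-sided bound \eqref{EqTheorem2Bounds} of Theorem \ref{TheoremMAPMarkov} and collapse the series factors $\underline\eta$ and $\overline\eta$ using the uniform bound $\overline P$. The whole argument rests on two elementary estimates: a geometric-series upper bound $\overline\eta \le \left(1 - \overline P\right)^{-1}$ and the trivial lower bound $\underline\eta \ge 1$. Once these are in hand, both inequalities in \eqref{EqTheorem2Bounds2} follow by substituting into the numerator and denominator of \eqref{EqTheorem2Bounds} and discarding the remaining nonnegative terms, so that the denominators $1 + \overline{P_{S_0}^{S_1}}\overline\eta$ and $1 + \underline{P_{S_0}^{S_1}}\underline\eta$ are controlled in the favorable direction.

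First I would establish the bound on $\overline\eta$. Since $\overline P$ dominates every per-state upper bound, i.e. $\overline{P_{S_j}^{S_{j+1}}} \le \overline P$ for all $j \ge 1$, each product satisfies $\prod_{j=1}^i \overline{P_{S_j}^{S_{j+1}}} \le \overline P^{\,i}$. Summing the geometric series, which converges because $\overline P < 1$, gives
\begin{equation*}
\overline\eta = 1 + \sum_{i=1}^\infty \prod_{j=1}^i \overline{P_{S_j}^{S_{j+1}}} \le 1 + \sum_{i=1}^\infty \overline P^{\,i} = \frac{1}{1 - \overline P}.
\end{equation*}
The companion estimate $\underline\eta \ge 1$ is immediate, since $\underline\eta$ is $1$ plus a sum of nonnegative products.

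Then I would substitute these into \eqref{EqTheorem2Bounds}. For the upper bound, I replace $\overline\eta$ in the numerator by $\left(1-\overline P\right)^{-1}$ and bound the denominator below by $1$ using $1 + \underline{P_{S_0}^{S_1}}\underline\eta \ge 1$, which yields $P^{{\rm{MAP}}} \le \left(1-\overline P\right)^{-1}\overline{P_{S_0}^{S_1}}$. For the lower bound, I use $\underline\eta \ge 1$ in the numerator and bound the denominator above by $1 + \overline{P_{S_0}^{S_1}}\overline\eta \le 1 + \overline P\left(1-\overline P\right)^{-1} = \left(1-\overline P\right)^{-1}$, so that $P^{{\rm{MAP}}} \ge \left(1-\overline P\right)\underline{P_{S_0}^{S_1}}$. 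Together these give \eqref{EqTheorem2Bounds2}.

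The computation is routine; the only point that needs care is the uniformity of the bound, namely that $\overline P$ dominates all the $\overline{P_{S_j}^{S_{j+1}}}$ and not merely the true transition probabilities $P_{S_j}^{S_{j+1}}$, since the geometric collapse is applied to $\overline\eta$ rather than to the chain itself. I would therefore verify, before invoking the series bound, that $\overline P$ is taken as a uniform upper bound on the family $\{\overline{P_{S_i}^{S_{i+1}}}\}_{i\ge 0}$, and note that convergence of the series, and hence a finite, well-defined $\overline\eta$ together with a positive ergodic $\pi\left(S_0\right)$, requires $\overline P < 1$, which is guaranteed once the control system is stable and, a fortiori, under the $\overline P \ll 1$ regime used in Corollary \ref{CorollaryMAPApproxPS0}.
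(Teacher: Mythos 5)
Your proof is correct and follows essentially the same route as the paper's: the geometric-series bound $\overline\eta \le \left(1-\overline P\right)^{-1}$, the trivial estimates $\underline\eta \ge 1$ and $1 + \underline{P_{S_0}^{S_1}}\,\underline\eta \ge 1$, and substitution into \eqref{EqTheorem2Bounds}. Your closing observation that the geometric collapse requires $\overline P$ to dominate the computed bounds $\overline{P_{S_j}^{S_{j+1}}}$ (not merely the true transition probabilities $P_{S_j}^{S_{j+1}}$ stated in the hypothesis) is a valid point that the paper's proof uses implicitly without comment, but it does not alter the argument.
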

\begin{proof}
Due to
\begin{align}
{1 + \overline {P_{{S_0}}^{{S_1}}}  \overline \eta } \le 1 + \sum\nolimits_{i = 1}^\infty  {{{\overline P}^i}}  = \left(1 - \overline P\right)^{-1},\\
\overline \eta \le 1 + \sum\nolimits_{i = 1}^\infty  {{{\overline P}^i}}  = \left(1 - \overline P\right)^{-1},
\end{align}
the lower bound of \eqref{EqTheorem2Bounds} is
\begin{equation}
{P^{{\rm{MAP}}}} \ge \frac{{\underline {P_{{S_0}}^{{S_1}}} \underline\eta }}{{1 + \overline {P_{{S_0}}^{{S_1}}}  \overline \eta }} \ge \frac{{\underline {P_{{S_0}}^{{S_1}}} }}{{1 + \overline {P_{{S_0}}^{{S_1}}} \overline \eta }} \ge \left( {1 - \overline P} \right)\underline {P_{{S_0}}^{{S_1}}} ,
\end{equation}
and the upper bound is
\begin{equation}
{P^{{\rm{MAP}}}} \le \frac{{\overline {P_{{S_0}}^{{S_1}}} \overline\eta }}{{1 + \underline {P_{{S_0}}^{{S_1}}}  \underline \eta }} \le {\overline {P_{{S_0}}^{{S_1}}} \overline\eta } \le \left(1 - \overline P\right)^{-1}\overline{P_{{S_0}}^{{S_1}}}.
\end{equation}
Thus, Theorem \ref{TheoremMAPApprox} is proven.
\end{proof}

\begin{remark}
As $i$ goes to infinity, the trace of ${{\bf\Sigma}_{S_i}}$ goes to infinity, which makes $P_{{S_i}}^{{S_{i + 1}}}$ approaches to the ML performance $P^{\rm ML}$, i.e., $\mathop {\lim }\limits_{i \to \infty } P_{{S_i}}^{{S_{i + 1}}} = {P^{{\rm{ML}}}}$. With this, we set $\overline P$ as the union bound of $P^{\rm ML}$, i.e.,
$
\overline P = \sum\limits_{{{\bf{b}}_e} \in {{\left\{ {0,1} \right\}}^{{N_y}n}}\backslash {{\bf{b}}_t}} Q\left( {\sqrt {\frac{{2{d_{{{\bf{c}}_t},{{\bf{c}}_e}}}}}{{{N_0}}}} } \right)
$
\end{remark}

\begin{corollary}\label{CorollaryMAPApproxPS0}
If $\overline P \ll 1$, we have
$\underline{P_{{S_0}}^{{S_1}}}\lesssim P^{{\rm{MAP}}} \lesssim \overline{P_{{S_0}}^{{S_1}}}.
$
\end{corollary}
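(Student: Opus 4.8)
The plan is to obtain Corollary \ref{CorollaryMAPApproxPS0} as a direct leading-order consequence of the two-sided bound in Theorem \ref{TheoremMAPApprox}, by examining the behaviour of the multiplicative prefactors $\left(1-\overline P\right)$ and $\left(1-\overline P\right)^{-1}$ in the regime $\overline P \ll 1$. First I would recall from Theorem \ref{TheoremMAPApprox} that
\begin{equation*}
\left(1 - \overline P\right)\underline{P_{{S_0}}^{{S_1}}} \le P^{{\rm{MAP}}} \le \left(1 - \overline P\right)^{-1}\overline{P_{{S_0}}^{{S_1}}},
\end{equation*}
so the whole task reduces to showing that both prefactors collapse to $1$ as $\overline P \to 0$, leaving $\underline{P_{{S_0}}^{{S_1}}}$ and $\overline{P_{{S_0}}^{{S_1}}}$ as the effective lower and upper bounds.

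For the upper bound I would expand $\left(1-\overline P\right)^{-1} = 1 + \overline P + O\!\left(\overline P^2\right)$, so that $\left(1-\overline P\right)^{-1}\overline{P_{{S_0}}^{{S_1}}} = \overline{P_{{S_0}}^{{S_1}}}\left(1 + O\!\left(\overline P\right)\right)$; the relative overshoot is $O\!\left(\overline P\right)$, which is negligible when $\overline P \ll 1$, yielding $P^{{\rm{MAP}}} \lesssim \overline{P_{{S_0}}^{{S_1}}}$. For the lower bound I would write $\left(1-\overline P\right)\underline{P_{{S_0}}^{{S_1}}} = \underline{P_{{S_0}}^{{S_1}}} - \overline P\,\underline{P_{{S_0}}^{{S_1}}}$ and exploit the fact that, by construction in the preceding remark, $\overline P$ upper-bounds every transition probability; in particular $\underline{P_{{S_0}}^{{S_1}}} \le P_{{S_0}}^{{S_1}} \le \overline P$, so the correction term $\overline P\,\underline{P_{{S_0}}^{{S_1}}}$ is at most $\overline P^2$ and hence second order. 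The lower bound is therefore $\underline{P_{{S_0}}^{{S_1}}}$ up to an $O\!\left(\overline P\right)$ relative deficit, giving $\underline{P_{{S_0}}^{{S_1}}} \lesssim P^{{\rm{MAP}}}$.

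The main subtlety, and the only point requiring care, is interpreting the symbol $\lesssim$ precisely: it is an asymptotic statement rather than a strict inequality, asserting that each bound holds up to a multiplicative factor of $1 + O\!\left(\overline P\right)$ that tends to $1$ as $\overline P \ll 1$. Once the two prefactors are shown to equal $1 + O\!\left(\overline P\right)$ — one from above and one from below — the corollary follows immediately, and I expect no serious obstacle beyond stating this asymptotic convention explicitly, since no further control-theoretic or coding-theoretic input is needed once Theorem \ref{TheoremMAPApprox} is in hand.
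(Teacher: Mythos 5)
Your proposal is correct and matches the paper's (implicit) argument: the corollary is stated as an immediate consequence of Theorem \ref{TheoremMAPApprox}, obtained exactly as you do by observing that the prefactors $\left(1-\overline P\right)$ and $\left(1-\overline P\right)^{-1}$ are both $1+O\!\left(\overline P\right)$ when $\overline P \ll 1$. Your extra refinement that the lower-bound correction $\overline P\,\underline{P_{S_0}^{S_1}}$ is $O\!\left(\overline P^2\right)$ is sound (since $\underline{P_{S_0}^{S_1}} \le P_{S_0}^{S_1} \le \overline P$ under the theorem's hypothesis) but not needed.
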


\section{Simulation Results}

\begin{figure}[t]
\setlength{\abovecaptionskip}{0.cm}
\setlength{\belowcaptionskip}{-0.cm}
  \centering{\includegraphics[scale=0.59]{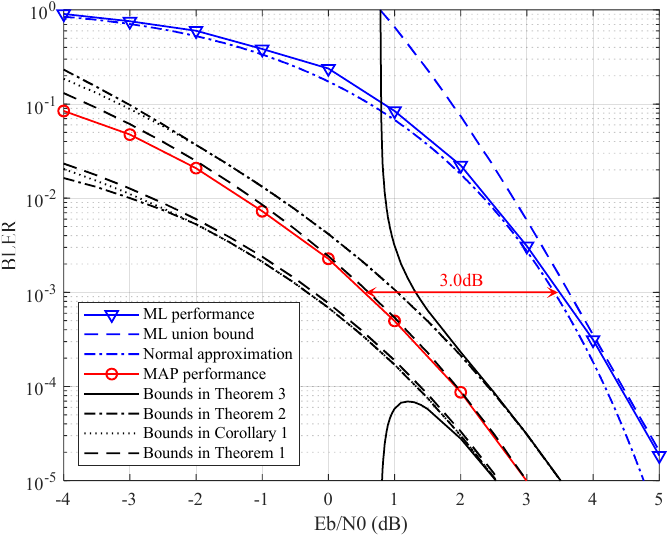}}
  \caption{BLER performance and bounds of scalar system.}\label{FigBLERNy1}
  \vspace{-1em}
\end{figure}

\begin{figure}[t]
\setlength{\abovecaptionskip}{0.cm}
\setlength{\belowcaptionskip}{-0.cm}
  \centering{\includegraphics[scale=0.59]{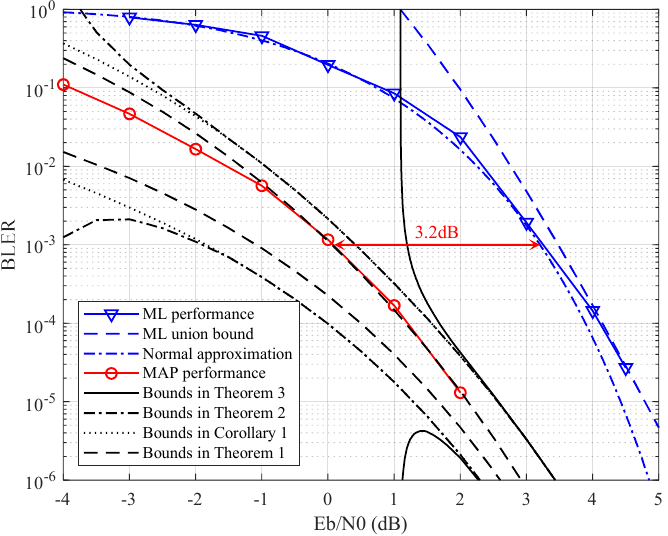}}
  \caption{BLER performance and bounds of vector system.}\label{FigBLERNy2}
  \vspace{-1em}
\end{figure}

\begin{figure}[t]
\setlength{\abovecaptionskip}{0.cm}
\setlength{\belowcaptionskip}{-0.cm}
  \centering{\includegraphics[scale=0.59]{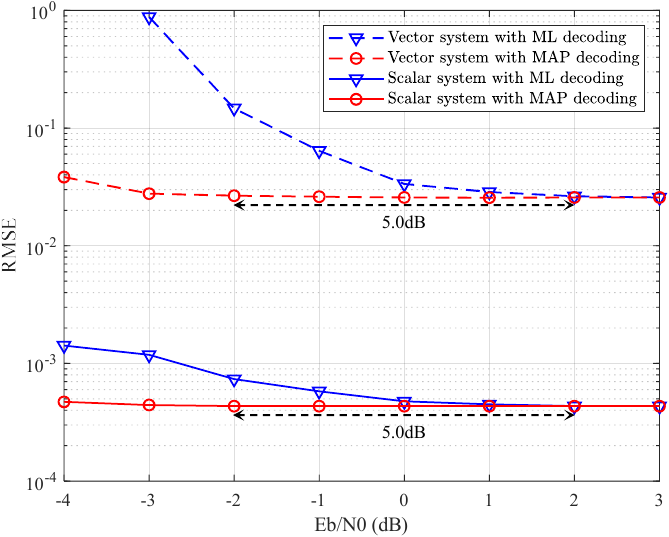}}
  \caption{RMSE performance of scalar and vector systems.}\label{FigRMSE}
  \vspace{-1em}
\end{figure}

We first provide the BLER performance bounds of scalar system and vector system with MAP decoding in Fig. \ref{FigBLERNy1} and Fig. \ref{FigBLERNy2}, respectively. Then the RMSE performance is shown in Fig. \ref{FigRMSE}. Polar codes constructed by the polar sequence with 16-bit CRC in 5G are used as the coding schemes \cite{3GPP_5G_polar} and the normal approximation of the finite block rate is provided \cite{5452208}. We calculate the first 5 terms of $\underline\eta$ and $\overline\eta$ to calculate the bounds in Theorem \ref{TheoremMAPMarkov}.

Fig. \ref{FigBLERNy1} illustrates the BLER performance and bounds of scalar system with MAP decoding, where the system model parameters are $A=-1.13$, $B=1$, $C=1$, $W = V = 10^{-7}$ and $x_{\rm{ref}}\left[ k \right] = 0, k=1,2,\cdots$, the quantization parameters are $Z = 5$ and $n = 16$, and the polar code parameters are $N = 64$ and $K = N_yn = 16$ with 16-bit CRC.
In Fig. \ref{FigBLERNy1}, we can first observe that the MAP performance outperforms the ML performance and the normal approximation and shows about $3.0$dB performance gain at BLER $10^{-3}$. Thus, the joint detection with KF and channel decoding can break through the limit of finite block rate and exhibits potential in communications with high real-time and reliability by short code length. Then, the bounds in Theorem \ref{TheoremMAPlimit} are closer to the MAP performance, compared with the bounds in Theorem \ref{TheoremMAPMarkov}, Theorem \ref{TheoremMAPApprox} and Corollary \ref{CorollaryMAPApproxPS0}.
The MAP performance coincides with the upper bound in Theorem \ref{TheoremMAPlimit} as the SNR increases and is larger than the lower bound in Theorem \ref{TheoremMAPlimit}.
Thus, the upper bound in Theorem \ref{TheoremMAPlimit} can be used to evaluate the MAP performance in the high SNR region.
The bounds in Theorem \ref{TheoremMAPMarkov} and Theorem \ref{TheoremMAPApprox} are close to these in Corollary \ref{CorollaryMAPApproxPS0} as the SNR increases.
For the bounds in Theorem \ref{TheoremMAPApprox}, since we set $\overline P$ as ML union bound, the SNR limit of bounds is identical to the SNR of ML union bound at BLER $10^0$.

Fig. \ref{FigBLERNy2} illustrates the BLER performance and bounds of vector system with MAP decoding, where the system model parameters are ${\bf{A}} = \left[\begin{smallmatrix} {1.25}&0\\ 1&{1.1} \end{smallmatrix}\right]$,
${\bf{B}} = \left[ {\begin{smallmatrix} 1\\ 1 \end{smallmatrix}} \right]$,
${\bf{C}} = \left[ {\begin{smallmatrix}1&0\\0&1\end{smallmatrix}} \right]$,
${\bf{W}} = {\bf{V}} = \left[ {\begin{smallmatrix}{{{10}^{ - 4}}}&0\\0&{{{10}^{ - 4}}}\end{smallmatrix}} \right]$, and
${\bf x}_{\rm{ref}}\left[ k \right] = \left[0, 0 \right]^T, k=1,2,\cdots,$
%\begin{equation*}
%\begin{array}{l}
%{\bf{A}} = \left[ {\begin{array}{*{20}{c}}
%{1.25}&0\\
%1&{1.1}
%\end{array}} \right],{\bf{B}} = \left[ {\begin{array}{*{20}{c}}
%1\\
%1
%\end{array}} \right],{\bf{C}} = \left[ {\begin{array}{*{20}{c}}
%1&0\\
%0&1
%\end{array}} \right],\\
%{\bf{W}} = \left[ {\begin{array}{*{20}{c}}
%{{{10}^{ - 4}}}&0\\
%0&{{{10}^{ - 4}}}
%\end{array}} \right],{\bf{V}} = \left[ {\begin{array}{*{20}{c}}
%{{{10}^{ - 4}}}&0\\
%0&{{{10}^{ - 4}}}
%\end{array}} \right], \\
%{\bf x}_{\rm{ref}}\left[ k \right] = \left[
%0, 0 \right]^T, k=1,2,\cdots,
%\end{array}
%\end{equation*}
the quantization parameters are $Z = 5$ and $n = 10$, and the polar code parameters are $N = 64$ and $K = N_yn = 20$ with 16-bit CRC. In Fig. \ref{FigBLERNy2}, we can obtain the same conclusion as Fig. \ref{FigBLERNy1} and the MAP performance shows about $3.2$dB performance gain at BLER $10^{-3}$ compared with the ML performance and the normal approximation.

Fig. \ref{FigRMSE} shows the RMSE performance of scalar and vector systems with MAP decoding and the parameters are identical to Fig. \ref{FigBLERNy1} and Fig. \ref{FigBLERNy2}.
We can observe that both the systems with MAP decoding have about $5.0$dB performance gain compared with those with ML decoding when the RMSE converges, which shows the advantage of the joint design of controls and communications with prior information.

\section{Conclusion}

In this paper, we  regard the joint detection with KF and channel decoding as MAP decoding and derive the corresponding bounds. We first derive the limiting bounds as the SNR goes to infinity and the system interference goes to zero. Then, we construct an infinite-state Markov chain and derive the MAP bounds. Finally, we approximate the MAP bounds as the bounds of the transition probability from state $S_0$ to state $S_1$.
The simulation results show that the MAP performance coincides with the limiting upper bound as the SNR increases and outperforms the normal approximation.

\bibliographystyle{IEEEtran}
\bibliography{IEEEabrv,myrefs}

\end{document}